\theoremstyle:=definition,remark,plain\do{%
        \expandafter\g@addto@macro\csname th@\theoremstyle\endcsname{%
            \addtolength\thm@preskip\parskip
            }%
        }
\definecolor{dnrbl}{rgb}{0,0,0.3}
\definecolor{dnrgr}{rgb}{0,0.3,0}
\definecolor{dnrre}{rgb}{0.5,0,0}
\theoremstyle{plain}
\newtheorem{thm}{Theorem}[section]
\newtheorem*{conjecture}{Conjecture}
\newtheorem{lem}[thm]{Lemma}
\newtheorem{coro}[thm]{Corollary}
\theoremstyle{definition}
\newtheorem{rem}[thm]{Remark}
\newtheorem{defi}[thm]{Definition}
\let\c@table\c@figure
\DeclarePairedDelimiter{\btuple}{\big\langle}{\big\rangle}
\newcommand{\Nat}{\mathbb{N}}
\newcommand{\restr}{\upharpoonright}  
\newcommand{\de}{\downarrow} 
\DeclarePairedDelimiter{\tuple}{\langle}{\rangle}
\newcommand{\bigo}[1]{\mathop{\bf O}\/\left({#1}\right)}
\newcommand{\bigos}[1]{\mathop{\bf O}\/\big({#1}\big)}
\newcommand{\smo}[1]{\mathop{\bf o}\/\left({#1}\right)}
\newcommand{\smos}[1]{\mathop{\bf o}\/\big({#1}\big)}
\DeclarePairedDelimiter{\dbra}{\llbracket}{\rrbracket}
\newcommand{\wgt}[1]{\mathop{\mathtt{wgt}}\/\left({#1}\right)}
\newcommand{\wgtb}[1]{\mathop{\mathtt{wgt}}\/\big({#1}\big)}
\newcommand{\ml}{Martin-L\"{o}f }
\newcommand{\KG}{Ku\v{c}era-G{\'a}cs\ }
\newcommand{\eg}{e.g.\ }
\newcommand{\ie}{i.e.\ }
\newcommand{\ce}{c.e.\ }
\newcommand{\lce}{left-c.e.\ }
\newcommand{\rce}{right-c.e.\ }
\newcommand{\pf}{prefix-free }
\renewenvironment{abstract}
 { \normalsize
  \list{}{
    \setlength{\leftmargin}{.0cm}%
    \setlength{\rightmargin}{\leftmargin}%
    }%
  \item {\bf \abstractname.} \relax}
 {\endlist}
\title{Compression of data streams down to their information content
\thanks{Barmpalias was supported by the 
1000 Talents Program for Young Scholars from the Chinese Government No.~D1101130, 
NSFC grant No.~11750110425 and Grant No.~ISCAS-2015-07 from the Institute of Software.
Lewis-Pye was supported by a Royal Society University 
Research Fellowship.}}
\author{George Barmpalias  \and Andrew Lewis-Pye}
\date{\today}
\begin{document}
\maketitle
\begin{abstract}
According to Kolmogorov complexity, 
every finite binary string is compressible to a shortest code -- its information content --
from which it is effectively recoverable. We investigate the extent to which this
holds for infinite binary sequences (streams). We devise a new coding method which
uniformly codes every stream $X$ into an algorithmically random stream $Y$, in such a way that the first
$n$ bits of $X$ are recoverable from the first $I(X\restr_n)$ bits of $Y$, where $I$
is any partial computable information content measure 
which is defined on all prefixes of $X$, and where $X\restr_n$ is the initial segment of $X$ of length $n$.
As a consequence, if $g$ is any computable upper 
bound on the initial segment \pf complexity of $X$, then $X$ is computable from an algorithmically
random $Y$ with oracle-use at most $g$. Alternatively 
(making no use of such a computable bound $g$) one can achieve an oracle-use  bounded above by $K(X\restr_n)+\log n$. 
This provides a strong analogue
of Shannon's source coding theorem for algorithmic information theory.
\end{abstract}
\vspace*{\fill}
\noindent{\bf George Barmpalias}\\[0.5em]
\noindent
State Key Lab of Computer Science, 
Institute of Software, Chinese Academy of Sciences, Beijing, China.\\[0.2em] 
\textit{E-mail:} \texttt{\textcolor{dnrgr}{barmpalias@gmail.com}}.
\textit{Web:} \texttt{\textcolor{dnrre}{http://barmpalias.net}}\par
\addvspace{\medskipamount}\medskip\medskip
\noindent{\bf Andrew Lewis-Pye}\\[0.5em]  
\noindent Department of Mathematics,
Columbia House, London School of Economics, 
Houghton Street, London, WC2A 2AE, United Kingdom.\\[0.2em]
\textit{E-mail:} \texttt{\textcolor{dnrgr}{A.Lewis7@lse.ac.uk.}}
\textit{Web:} \texttt{\textcolor{dnrre}{https://lewis-pye.com}}
\vfill 
{\em Keywords:} Source coding, Algorithmic Information theory, Compression, Kolmogorov complexity, Prefix-free codes; Layered Kraft-Chaitin theorem.
\vspace{0.3cm}
\thispagestyle{empty}\clearpage
\section{Introduction} 
A fruitful way to quantify the complexity of a finite object such as a string $\sigma$ in a finite alphabet\footnote{In the
following we restrict our discussion to the binary alphabet, but our results hold 
in general for any finite alphabet.}
is to consider the length of the shortest binary program which prints $\sigma$. 
This fundamental idea gives rise to a theory of algorithmic information and compression, which is based
on the theory of computation and was pioneered by Kolmogorov \cite{MR0184801}
and Solomonoff \cite{Solomonoff:64}. 
The Kolmogorov complexity of a binary string $\sigma$ is the length of the shortest program that
outputs $\sigma$ with respect to a fixed universal Turing machine.
The use of \pf machines in the definition of Kolmogorov complexity was pioneered by
Levin \cite{Lev74tra} and Chaitin \cite{MR0411829}, and 
allowed for 
 the development of a robust theory
of incompressibility and algorithmic randomness for streams (\ie infinite binary
sequences). 
Information content measures, defined by Chaitin \cite{chaitinincomp} after Levin \cite{Lev74tra},
are functions that assign a positive integer value to each binary string, representing
the amount of information contained in the string.

\begin{defi}[Information content measure] \label{X5aVfWRM87}
A partial function $I$ from strings to $\Nat$ is an information content measure if
it is right-c.e.\footnote{A function $f$ is \rce if it is computably approximable from above, \ie
it has a computable approximation $f_s$ such that $f_{s+1}(n)\leq f_{s}(n)$ for all $s,n$.} and $\sum_{I(\sigma)\de} 2^{-I(\sigma)}$ is finite. 
\end{defi}
Prefix-free Kolmogorov 
complexity can be characterized as the minimum (modulo an additive constant)
information content measure.
If $K(\sigma)$ denotes the \pf Kolmogorov complexity of the string $\sigma$, 
then for $c\in \mathbb{N}$ we say that $\sigma$ is {\em $c$-incompressible} if $K(\sigma)\geq |\sigma|-c$.
It is a basic fact concerning Kolmogorov complexity that for some universal constant $c$:  
\begin{equation}\label{fEFIMi19Nt}
\parbox{12cm}{every string $\sigma$ 
has a shortest code $\sigma^{\ast}$ which is
itself $c$-incompressible.}
\end{equation}
Our goal is to investigate the extent to which the above fact holds in an infinite setting, 
\ie for streams instead of strings.
%
In the context of Kolmogorov complexity, algorithmic randomness is 
defined as incompressibility.\footnote{The standard notion of algorithmic randomness for 
streams is due to \ml \cite{MR0223179} and is based
on effective statistical tests. Schnorr \cite{Schnorr:75,MR0354328} 
showed that a binary stream is \ml random if and only if
there exists  $c\in\Nat$ for which all its initial segments are $c$-incompressible.}
So \eqref{fEFIMi19Nt} can be read as follows:
{\em we can uniformly code each string $\sigma$ into an algorithmically 
random string of length $K(\sigma)$.}
In order to formalise an infinitary analogue of this statement, we need to
make use of oracle-machine computations, and work with oracle utilization rather than lengths for codes. 

\begin{defi}[Oracle-use] \label{orau} 
For a binary stream  $X$, we let $X\upharpoonright_n$  
denote the initial segment of $X$ of length $n$. 
Given two binary streams $X$ and $Y$, we say $X$ is computable from $Y$ with 
{\em oracle-use  $n\mapsto f(n)$} if there exists an oracle Turing machine which, 
when given oracle $Y$ and input $n$,  halts and outputs $X\upharpoonright_n$ 
 after performing a computation
in which the elements of $Y$ less than $f(n)$ are queried.
\end{defi}

Our first result states that, if $I$ is any partial computable information content measure $I$,
then every stream $X$ along which $I$ is defined can be compressed into a stream $Y$, 
in such a way that the first $n$ bits of $X$ 
are recoverable from the first $I(X\restr_n)$ bits of $Y$.

\begin{thm}\label{yMWDPEuwPx}
Suppose $I$ is a partial computable information content measure.  Then
every binary stream $X$ satisfying the condition that $\forall n\ I(X\restr_n)\de$ 
can be coded into a \ml random binary stream $Y$, in  such a way that $X$ is computed
from $Y$ with oracle-use $n\mapsto \min_{i\geq n} I(X\restr_i)$.
\end{thm}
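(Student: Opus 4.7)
The plan is to construct a single Turing functional $\Gamma$ and, for each eligible $X$, a \ml random stream $Y=Y_X$ with $\Gamma^{Y}=X$ and oracle-use $g(n):=\min_{i\geq n}I(X\restr_i)$. Since $I$ takes integer values, for each $n$ the infimum $g(n)$ is attained at some $i^*\geq n$ with $I(X\restr_{i^*})=g(n)$; call such $i^*$ a \emph{checkpoint} along $X$. Between consecutive checkpoints $g$ is constant, so it suffices to ensure that at each checkpoint $i$ the first $I(X\restr_i)$ bits of $Y_X$ computably determine $X\restr_i$; then for $n\leq i$ the prefix $X\restr_n$ is read off.

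The heart of the proof is a tree-compatible (``layered'') variant of Kraft–Chaitin. Using the right-c.e.\ approximation of $I$ together with $\sum 2^{-I(\sigma)}<\infty$, we enumerate weighted requests $(\sigma,I(\sigma))$ and effectively produce a code $\tau_\sigma\in\twolo$ of length $I(\sigma)+O(1)$ for each $\sigma$ with $I(\sigma)\de$, so that the map $\sigma\mapsto\tau_\sigma$ is consistent with the tree structure in the following sense: for every stream $X$ and every pair of checkpoints $i<j$ along $X$, $\tau_{X\restr_i}\preceq\tau_{X\restr_j}$, and moreover $\tau_\sigma\preceq\tau_{\sigma'}$ implies $\sigma\preceq\sigma'$ (ensuring unique decoding). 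Given this, define $Y_X$ as the common extension of the nested chain $\tau_{X\restr_{i_0}}\preceq\tau_{X\restr_{i_1}}\preceq\cdots$ of codes at checkpoints of $X$, with any unused bits filled by an algorithmically random source such as $\Omega$. The functional $\Gamma$, on input $n$ and oracle $Y$, enumerates the coding until it locates some $\sigma$ with $|\sigma|\geq n$ and $\tau_\sigma\preceq Y$, then outputs $\sigma\restr_n$; consistency forces $\sigma\preceq X$, and the oracle-use is at most $|\tau_{X\restr_{i^*}}|=g(n)+O(1)$, with the additive constant absorbed by renormalising $I$ at the outset.

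The randomness of $Y_X$ follows from $K\leq I+O(1)$ on the domain of $I$: the prefix $Y_X\restr_{g(n)}=\tau_{X\restr_{i^*}}$ decodes to $X\restr_{i^*}$, so $K(Y_X\restr_{g(n)})\geq K(X\restr_{i^*})-O(1)\geq I(X\restr_{i^*})-O(1)=g(n)-O(1)$, and the random padding extends this bound to all intermediate lengths. The main obstacle is the layered Kraft–Chaitin construction itself: ordinary Kraft–Chaitin delivers prefix-free codes but not codes nested along every stream simultaneously, and the right-c.e.\ dynamics of $I$ mean that an early code placement may need to accommodate new checkpoints discovered as $I_s(X\restr_m)$ decreases with $s$ toward a direction not yet reserved. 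Resolving this requires allocating generously-sized subtrees in the code space ahead of time so that new checkpoints always find compatible extensions, at a cost of only an additive $O(1)$ in code length; organising this allocation is the central combinatorial content of the proof.
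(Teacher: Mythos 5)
Your high-level architecture matches the paper's: compress via a layered (nested) Kraft--Chaitin mechanism, track ``checkpoints'' along $X$ where $\min_{i\geq n}I(X\restr_i)$ is attained (the paper's ``local $I$-minima''), build $Y$ from a nested chain of codes for the checkpoint prefixes, and decode by scanning the enumeration. You also correctly identify that the central combinatorial content is the layered Kraft--Chaitin theorem, which the paper proves in \S\ref{iKCR6nBLqB}.

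However, your argument for the \ml randomness of $Y$ has a genuine gap. You invoke $K\leq I+O(1)$ (correct, by minimality of $K$ among information content measures), but then write
$K(Y_X\restr_{g(n)})\geq K(X\restr_{i^*})-O(1)\geq I(X\restr_{i^*})-O(1)$, and the second inequality uses $K(X\restr_{i^*})\geq I(X\restr_{i^*})-O(1)$, which is the \emph{reverse} of what you have. This reverse inequality is false in general: take, for instance, $I(\sigma)=3|\sigma|$, a total computable information content measure (since $\sum_\sigma 2^{-3|\sigma|}<\infty$). For a \ml random $X$ one has $K(X\restr_{n})\approx n$ but $g(n)=I(X\restr_{n})=3n$, so your lower bound delivers $K(Y\restr_{3n})\gtrsim n$, far short of the $3n-O(1)$ needed for randomness. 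In this situation the code $Y$ is much longer than $X$, and the slack bits of $Y$ (those not forced by the decoding requirement) are precisely where randomness must be actively enforced; filling them by invoking ``an algorithmically random source such as $\Omega$'' is a hand-wave that neither preserves the nested code structure you need for decoding nor obviously yields randomness of the interleaved whole.

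The paper's mechanism is quite different and is an essential part of the argument you are missing. The paper proves a strengthened Lemma~\ref{EwCVSBQo7C}: one fixes a \ce prefix-free set $Q$ of small weight (taken to be a level $Q_c=\{\sigma:K(\sigma)<|\sigma|-c\}$ of a universal test) and produces codes $Y$ that \emph{avoid} $Q$, i.e.\ have no prefix in $Q$. Avoidance is enforced dynamically: the layered KC-sequence $L$ built from $I$ is interwoven with a \emph{filtered enumeration} of $Q$ (Definition~\ref{PvKfvvp7j8}) to form a second layered KC-sequence $L'$ (Definition~\ref{dUlAq3Dz7}), where each time a current leaf-code acquires a prefix in $Q$ (an ``adaptive'' stage), a fresh sibling request of the same length and pointer is injected so that the greedy solution replaces the contaminated code. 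The weight overhead of these replacements is bounded by $\wgt{Q}$ (Lemma~\ref{uch2OtHyt}), so the budget $\sum 2^{-I(\sigma)}+\wgt{Q}<1$ keeps the total admissible. This is where the bound on $\wgt{Q}$ enters and why the resulting $Y$ is \ml random --- not via any lower bound on $K(X\restr_{i^*})$.
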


Theorem \ref{yMWDPEuwPx} holds, moreover, in an 
entirely uniform fashion, in the sense that there exist a universal constant $c$ and 
a single Turing functional which  computes each $X$ from its code $Y$ with use at most  $n\mapsto c+ \min_{i\geq n} I(X\restr_i)$. 
Prefix-free complexity is not computable, but if we have a computable upper bound on the
initial segment complexity of $X$, the following consequence (derived later from 
Theorem \ref{yMWDPEuwPx}) is applicable.

\begin{coro}\label{FxNsrDIzhj}
If $g$ is a computable upper bound on the initial segment \pf complexity of a stream $X$,
then $X$ is computable from a \ml random stream $Y$ with oracle-use 
$n\mapsto \min_{i\geq n} g(i)$.
\end{coro}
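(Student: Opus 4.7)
The plan is to reduce the corollary to Theorem \ref{yMWDPEuwPx} by packaging $g$ into a suitable partial computable information content measure $I$. The naive choice $I(\sigma) := g(|\sigma|)$ on all strings generally fails, since $\sum_\sigma 2^{-g(|\sigma|)} = \sum_n 2^n\cdot 2^{-g(n)}$ may diverge (for instance when $g(n) \leq n$, as is possible for computable $X$). The key idea is therefore to restrict the domain of $I$ just enough to force summability while still guaranteeing that every initial segment of $X$ lies in the domain.

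Concretely, I would define $I$ by the following partial computable procedure: on input $\sigma$, compute $g(|\sigma|)$, then run the standard right-c.e.\ approximation $K_s(\sigma)$; if and when $K_s(\sigma) \leq g(|\sigma|)$ at some stage $s$, halt and output $g(|\sigma|)$. Thus $I(\sigma)\de$ precisely when $K(\sigma) \leq g(|\sigma|)$, in which case $I(\sigma) = g(|\sigma|)$. This function is partial computable by construction.

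Next I would verify the hypotheses of Theorem \ref{yMWDPEuwPx}. The domain of $I$ contains every prefix $X\restr_n$ because the assumption on $g$ gives $K(X\restr_n)\leq g(n)$, so $I(X\restr_n)\de = g(n)$. For the Kraft-style summability, every $\sigma$ with $I(\sigma)\de$ satisfies $2^{-I(\sigma)} = 2^{-g(|\sigma|)} \leq 2^{-K(\sigma)}$, so
\[
\sum_{I(\sigma)\de} 2^{-I(\sigma)} \;\leq\; \sum_\sigma 2^{-K(\sigma)} \;\leq\; 1
\]
by the Kraft inequality for the universal \pf machine. Hence $I$ is a partial computable information content measure defined on all prefixes of $X$.

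Applying Theorem \ref{yMWDPEuwPx} to this $I$ then produces a \ml random stream $Y$ from which $X$ is computable with oracle-use $n\mapsto \min_{i\geq n} I(X\restr_i) = \min_{i\geq n} g(i)$, which is exactly what the corollary asserts. There is no substantive technical obstacle: the entire content of the corollary lies in the correct design of $I$, a purely cosmetic step that converts a computable bound on $K$ along $X$ into an object to which the main theorem directly applies.
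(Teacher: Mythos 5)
Your construction of $I$ and the reduction to Theorem \ref{yMWDPEuwPx} are correct and coincide with the paper's own argument: the paper also defines $I(\sigma)=g(|\sigma|)$ precisely when some stage $s$ witnesses $K_s(\sigma)\leq g(|\sigma|)$, observes that all prefixes of $X$ lie in its domain, and bounds $\sum_{I(\sigma)\de}2^{-I(\sigma)}$ by the weight of the domain of the universal machine. The only cosmetic difference is that the paper attributes a distinct short description $\tau_\sigma$ to each $\sigma$ in the domain and sums their weights, whereas you pass directly through $2^{-g(|\sigma|)}\leq 2^{-K(\sigma)}$ and the Kraft inequality; these are the same estimate.
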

For example, if the \pf complexity of $X$ is bounded above by $5\log n$, we can 
compute $X$ from some algorithmically random $Y$
with oracle-use $5\log n$.
When no useful computable upper bound on the initial segment complexity of the source $X$ 
is known, one can instead apply the  following
 theorem, which gives an upper bound on the oracle-use in terms of $K$ 
(and which, as for Theorem \ref{yMWDPEuwPx}, holds in an entirely 
uniform fashion relative to a fixed universal constant). 
Throughout this paper, logarithms are given base 2.  

\begin{thm}\label{Z6AicbrzJm}
Every binary stream $X$ can be coded into a \ml random binary stream $Y$ such that $X$ is computable from $Y$ 
with oracle-use at most $n\mapsto \min_{i\geq n} (K(X\restr_i)+\log i)$.
\end{thm}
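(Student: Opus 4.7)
My plan is to adapt the proof of Theorem~\ref{yMWDPEuwPx} to accept the right-c.e.\ information content measure $I(\sigma):=K(\sigma)+\lceil\log|\sigma|\rceil$ in place of a partial computable one.

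The preliminary sanity check is that $I$ really is an information content measure in the sense of Definition~\ref{X5aVfWRM87}. Writing $\Omega_n:=\sum_{|\sigma|=n}2^{-K(\sigma)}$, so that $\sum_n\Omega_n\leq\Omega\leq 1$, we have
\begin{equation*}
\sum_\sigma 2^{-I(\sigma)}\leq\sum_{n\geq 1}n^{-1}\Omega_n\leq\sum_n\Omega_n\leq 1,
\end{equation*}
and the right-c.e.-ness of $I$ follows from that of $K$. However, Theorem~\ref{yMWDPEuwPx} cannot be applied as a black box: there is no partial computable replacement for $I$ with the same bound. Indeed, any total computable $f$ with $f(\sigma)\leq K(\sigma)+\log|\sigma|+O(1)$ would make $f(\sigma)-\log|\sigma|-O(1)$ a computable lower bound on $K$, hence bounded by a constant, forcing $f(\sigma)\leq\log|\sigma|+O(1)$ and then $\sum_\sigma 2^{-f(\sigma)}\geq c\sum_n 2^n/n=\infty$; and if $f$ were partial then the hypothesis of Theorem~\ref{yMWDPEuwPx} would fail on every $X$ with $K(X\restr_n)$ eventually close to $K(n)$.

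The way forward is to rerun the Kraft--Chaitin code-reservation construction that underpins Theorem~\ref{yMWDPEuwPx}, but in the \emph{layered} Kraft--Chaitin form alluded to by the paper's keywords. Using the stage-$s$ approximation $I_s(\sigma):=K_s(\sigma)+\lceil\log|\sigma|\rceil$, each time $I_s(X\restr_n)$ strictly drops we issue a fresh shorter reservation for $X\restr_n$ on top of the previous longer one, and leave the layered bookkeeping to rechannel the bits of $Y$ that had been allocated to the outdated reservation. For each fixed $n$ the weights of the successive reservations form a geometric sum dominated by $O(2^{-(K(X\restr_n)+\log n)})$; summing over $n$ and using $K(X\restr_n)\geq K(n)-O(1)$ gives total reservation weight $O\!\left(\sum_n 2^{-K(n)}\right)=O(1)$, which Kraft tolerates.

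The limit $Y$ of these reservations is then the desired stream, \ml random by the total-weight bound, and a single Turing functional that reads $K(X\restr_n)+\log n+O(1)$ bits of $Y$ and decodes through the layered Kraft--Chaitin machine recovers $X\restr_n$; replacing the bound by $\min_{i\geq n}(K(X\restr_i)+\log i)$ is then automatic from the non-monotonicity absorption already built into Theorem~\ref{yMWDPEuwPx}'s statement. The main technical obstacle is the layered bookkeeping itself: one must arrange that bits of $Y$ committed to a now-superseded longer code for $X\restr_n$ can be effectively rechanneled to the new shorter code without destroying either the prefix-freeness of $Y$ or the uniformity of the decoding functional, and this is precisely what the Layered Kraft--Chaitin theorem is engineered to deliver.
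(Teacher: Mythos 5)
Your choice of $I(\sigma)=K(\sigma)+\lceil\log|\sigma|\rceil$ is the right information content measure, and you correctly see that Theorem~\ref{yMWDPEuwPx} cannot be invoked as a black box since $I$ is not partial computable; so far this tracks the paper. But the central step of your argument --- the weight accounting --- has a genuine gap, and it is precisely the place where the hard work lives.

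You write that each time $K_s(X\restr_n)$ drops we ``issue a fresh shorter reservation for $X\restr_n$ on top of the previous longer one, and leave the layered bookkeeping to rechannel the bits of $Y$ that had been allocated to the outdated reservation,'' and you bound the total weight by the geometric sum $\sum_i 2^{-k_i-\log n}=O(2^{-K(X\restr_n)-\log n})$ over the successive values $k_0>k_1>\cdots$ of $K_s(X\restr_n)$. This accounts only for the requests tied to $X\restr_n$ itself. It misses the cascade: in a layered KC-sequence, when the request for $\sigma$ is superseded by a shorter one, every already-issued request that points (transitively) to $\sigma$'s request becomes invalid as well --- their codes sit above a now-abandoned code for $\sigma$ --- and each must be reissued as a fresh request pointing into the new subtree. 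The layered KC-theorem does not ``rechannel'' anything; one must explicitly enumerate these replacement requests (the paper's \emph{clone extension} of the entire valid subtree rooted at the target), and their weight must be charged somewhere. Your geometric-sum bound charges nothing for them.

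This is exactly where the $+\log|\sigma|$ is load-bearing, and not merely a convergence-of-the-series device as your sanity check suggests. The paper's Lemma~\ref{pXeAC7F56T} shows $\sum_{\rho\succeq\sigma}2^{-K(\rho)-\log|\rho|}\leq 2^{-K(\sigma)+c}$, so that the weight of the \emph{entire} cloned subtree above $\sigma$ is $O(2^{-K_s(\sigma)})$. When $K_{s+1}(\sigma)=K_s(\sigma)-1$, the new $\sigma$-request costs $2^{-K_{s+1}(\sigma)-\log|\sigma|-c}$ and the cloned subtree costs at most $2^{-K_s(\sigma)}=2^{-K_{s+1}(\sigma)-1}$, and together these are dominated by $2^{-K_{s+1}(\sigma)}$, which is exactly the fresh weight that $U$ just spent on $\sigma$. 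Without the $\log$ term there is no such bound on the subtree weight (the raw sum $\sum_{\rho\succeq\sigma}2^{-K(\rho)}$ is not controlled by $2^{-K(\sigma)}$), and the construction leaks weight unboundedly. In short: you have identified the right measure and the right tool, but the reason the proof works is the subtree weight estimate of Lemma~\ref{pXeAC7F56T} together with the target/pre-target/clone-extension bookkeeping, and your sketch neither states nor uses it.

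A smaller point: your per-$n$ weight estimate is phrased for a single fixed $X$ (``summing over $n$ and using $K(X\restr_n)\geq K(n)-O(1)$''), but the universal layered KC-sequence must serve all streams simultaneously, so the relevant bound is over all strings $\sigma$, i.e.\ $\sum_\sigma 2^{-K(\sigma)-\log|\sigma|}\leq 1$ rather than $\sum_n 2^{-K(n)}$.
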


In \S\ref{xbu9Sgt5wv} we explain that our results
offer considerably improved compression in comparison with
the existing methods and are optimal in a strong sense.
Here we briefly outline the main points of our contribution.

{\bf Outline of our contribution compared to the state-of-the-art.}
Our contribution is two-fold: 
first in terms of a considerable improvement on 
the oracle-use to essentially optimal bounds, and second in terms of
a new coding method that is necessary to achieve this result.
The oracle-use we obtain in Theorem \ref{Z6AicbrzJm} is optimal modulo $3\log n$, 
in contrast with the previously best overhead of
more than $\sqrt{n}\cdot \log n$. Given that a typical compressible stream may have  initial segment complexity 
$\smos{\sqrt{n}\cdot \log n}$, even logarithmic  or poly-logarithmic, our results shave-off an overhead  
which is overwhelming compared to the number of bits of the oracle  
that are necessary for the computation of the first
$n$ bits of the source (\ie its Kolmogorov complexity modulo a logarithm).
Even in the worst case of incompressible sources, their initial segment complexity is never more than $n+2\log n$, 
so the previous overhead $\sqrt{n}\cdot \log n$ is still considerable compared to the information coded,
while our overhead $\log n$ from Theorem \ref{Z6AicbrzJm} is exponentially smaller in the same comparison, hence
negligible. In addition, given any computable function $g$, 
in the case where we are interested in coding every
stream of initial segment complexity  at most $g$, our Corollary \ref{FxNsrDIzhj} gives overhead 0
(\ie oracle-use {\em exactly} $n\mapsto g(n)$) compared to the overhead  $\sqrt{n}\cdot \log n$
(\ie oracle-use $n\mapsto g(n)+\sqrt{n}\cdot \log n$)
that is present in all previously known coding methods.

Equally importantly, it is known that the overhead $\sqrt{n}\cdot \log n$
is inherent in any of the previous coding methods, so in order to achieve our optimal bounds it was necessary to
invent a new coding method. Our results are based on a 
new tool, the {\em layered Kraft-Chaitin coding}, which allows for the
construction of infinitary on-line codes with negligible overhead. 
This new methodology is a strong infinitary analogue of
the classic Kraft-McMillan and Huffman 
tools \cite{Kraft:49,McMillan1056818,Huffman4051119} 
for the construction of finitary prefix codes with minimum redundancy, which are part of any
information theory textbook, \eg \cite[Chapter 5]{cover2006}.
Intuitively, our method allows to code several concatenated messages in a binary stream in an on-line
manner, {\em without the need of out-of-band markers or the overheads produced by 
concatenating \pf codes}.\footnote{By \cite{codico16}, the overhead
$\sqrt{n}\cdot \log n$ found in previous methods, is the accumulation of the smaller overheads that
are inherent in  \pf codes, and is the result of concatenating a \pf code for the construction a block-code
of the source. Here and thereafter, the term {\em on-line} refers to the uniform production of {\em approximations} to the code stream from the source.
The actual final code for the source will not be effectively obtainable from the source.}
Given the wide applicability of \pf codes,  our methodology 
is likely to have further applications.

Here we stress that in our coding method,
the code $Y$ is not effectively obtainable from the source $X$.
However the code $Y$ can be effectively approximated, given the source $X$.
On the other hand, the decoding (calculating $X$ from $Y$) is completely effective.

\begin{rem}[Stream compression in historical context]
The archetype of stream coding in classical information theory is
Shannon's source coding theorem (or noiseless coding theorem)
from \cite{shannon48} which assumes a probabilistic source 
and compresses such that the code rate is arbitrarily close to its Shannon entropy.
 When the source is not probabilistic, the problem of
data compression has been traditionally called {\em combinatorial source coding} (\eg see
Ryabko \cite{Rya84,Rya86}). Kolmogorov \cite{MR0184801} elaborated on the differences
between the probabilistic and the combinatorial approaches to 
information theory, and also introduced a 
third algorithmic approach, which set the foundations of algorithmic information
theory (along with Solomonoff \cite{Solomonoff:64}). 
Ryabko \cite{Rya84,Rya86} was one of the first who connected combinatorial coding with 
Kolmogorov complexity, proving an analogue of Shannon's source coding theorem, and his
results are discussed in detail in the following sections. In this sense, our results
can also be seen as analogues of Shannon's source coding theorem in terms of 
algorithmic information,
while the main result of Ryabko \cite{Rya84,Rya86} can be described as an analogue in terms
of Hausdorff dimension.\footnote{We elaborate on the background of the algorithmic approach to 
stream coding in \S \ref{wrsYJDwjf}.}
\end{rem}

\begin{rem}[Probabilistic and algorithmic stream coding]\label{w4hupePuRP}
Many connections between Shannon entropy and Kolmogorov complexity 
have been established in the literature \cite{Leung-Yan-CheongC78, vitanyigrunwald}.
For example,  for any computable probability distribution, 
the expected value of Kolmogorov complexity equals its Shannon entropy, up to a constant.
However the two information measures are conceptually different, with Shannon
entropy assigning complexity to random variables and Kolmogorov complexity focusing on
the complexity of individual finite objects such as strings. This conceptual difference
is also present in the coding theorems. The Shannon source coding theorem
focuses on the average coding rate, i.e.\ optimizing the compression of the typical
streams. In contrast, the algorithmic approach aims at compressing non-random streams,
\ie streams whose initial segments can be described by shorter programs.
The ultimate goal here is to devise a coding process which compresses {\em every} stream 
at a rate that reflects the information content of its prefixes, \ie its initial segment complexity.
The focus in such a universal process
is on non-typical streams which have compressible initial segments,
so the coding can potentially make them recoverable from streams
with oracle-use that matches the information content of their initial segments.
\end{rem}


{\bf Outline of the presentation.}
The goal of this work is to obtain an optimal method for compressing 
binary streams into algorithmically random streams.\footnote{We are not concerned 
with the code streams being effectively constructable from their sources. On the other hand
the decoding should be effective, \ie the source should be computable from the code.}
The first aspect of this goal is the
compression of binary streams
and its relation with Kolmogorov complexity, and is discussed in
\S \ref{wrsYJDwjf}. The second aspect is the problem 
of coding non-trivial information
into algorithmically random strings or streams\footnote{A dual topic is what is known
as randomness or dimension extraction which, roughly speaking, asks for the effective
transformation of a given stream $X$ into a stream $Y$ which is algorithmically random or, at least,
has higher Hausdorff dimension than $X$. Although this problem is only tangential to our topic,
it is very related to the work of Ryabko \cite{Rya84,Rya86} and later Doty \cite{Doty08} which we
discuss in the following. For more information on this we suggest Doty \cite[\S 4]{Doty08}
and the more recent Miller \cite{dimiller}.\label{Sm3tYlsiw2}}
and is discussed in
\S \ref{6CMdw8OhFH}.
We elaborate on the well-known fact that, in a finite setting, maximal compression
gives a natural example of computation from algorithmically random strings, and discuss
the extent to which this phenomenon has been established in an infinitary setting.
In \S \ref{xbu9Sgt5wv} we describe how our results provide optimal answers to
the combined problem of compression and computation from algorithmically 
random oracles in the case of binary streams. Moreover we explicitly compare
our results with the state-of-the-art in the literature, explain why obtaining optimal bounds
required a considerably new methodology, and break down the novelty of our coding into one
defining property.
The main and technical part of our contribution starts in  
in \S \ref{iKCR6nBLqB}, where we develop and verify a sophisticated tool, 
 {\em the layered Kraft-Chaitin theorem}, which allows for the construction of infinite
optimal codes. Then in \S 
\ref{pm7mWNnuPr} we apply our general coding result  
result in the specific setting of a universal discrete semi-measure
corresponding to the underlying optimal \pf machine, in order to obtain the 
theorems discussed at the beginning of this section.
Finally in \S\ref{7vgsUJbYd} we present some concluding thoughts on the present work,
including ideas for possible extensions of our results and some related open problems. 

\subsection{The online compression of binary streams and Kolmogorov complexity}\label{wrsYJDwjf}
Compressibility of strings is well-understood in terms of Kolmogorov complexity.
When we apply the same methodology to a binary stream $X$,
we are interested in the initial segment complexity $n\mapsto K(X\restr_n)$,
and in particular the rate of growth of this function. This is, however,  a {\em non-uniform}
way to look at the compressibility of a stream $X$, since the individual programs
that compress the various initial segments of $X$ down to their initial segment complexity
may be unrelated. Kobayashi \cite{Kobayashi_rep} proposed the following uniform notion of compressibility
for streams.\footnote{Kobayashi's uniform notion of compressibility has proved useful in many contexts,
\eg \cite{MR1455141,Balcazar:2006,koba_rod}. Balc{\'a}zar,  Gavald{\`a}, and  Hermo \cite{MR1455141}, using different
methods than the method in the present article, showed
that any stream with logarithmic initial segment complexity is $\bigo{\log n}$-compressible
in the sense of Definition \ref{8scITRkZc3}.} 

\begin{defi}[Kobayashi \cite{Kobayashi_rep}]\label{8scITRkZc3}
Given a function $f:\Nat\to\Nat$, we say $X$ is $f$-compressible if there exists
$Y$ which computes $X$ with oracle-use bounded above by $f$.
\end{defi}

Ryabko \cite{Rya84,Rya86} discussed an online form of block-coding, which
codes any $X$ into some $Y$ so that the first $u_n$ bits of $Y$
code  $X\restr_n$, for a certain non-decreasing function $n\mapsto u_n$.
He defined the {\em cost of the code on $X$} to be
$\liminf_n u_n/n$ (this was also called the {\em decompression ratio} in 
Doty \cite{Doty08}) and  constructed a universal (in the sense that it applies to any given stream $X$) 
compression algorithm which codes any $X$ into some $Y$ with {\em cost}
the effective Hausdorff dimension of $X$. 
By Mayordomo \cite{Mayordomo02}, the  effective Hausdorff dimension of $X$ is also 
known to equal $\liminf_n K(X\restr_n)/n$.
In terms of Definition \ref{8scITRkZc3}, Ryabko thus showed that every binary stream $X$ is
$f$-compressible, for a non-decreasing function $f$ such that 
\begin{equation}\label{F6Ihiv38MU}
\liminf_n \frac{f(n)}{n} =\liminf_n \frac{K(X\restr_n)}{n}.
\end{equation}
Doty \cite{Doty08}, building on and improving Ryabko's work, explored the above characterization
of effective Hausdorff dimension  in terms of the cost of the optimal compression  in
various resource-bounded settings. In both \cite{Rya84,Rya86} and
\cite{Doty08} the authors ignore sub-linear $\smo{n}$ differences in the oracle-use function $f$
of their coding, and the statements of their results are solely concerned with achieving
the asymptotic \eqref{F6Ihiv38MU}.
An analysis of their arguments, however, shows that for each 
$X$ the oracle-use $f(n)$ is at best $K(X\restr_n)+\sqrt{n}\cdot\log n$.
The overhead $\sqrt{n}\cdot\log n$ is constantly present, independently of the complexity of the source $X$,
and is due to the fact that the coding used is an adaptation of the block-coding method of  G\'{a}cs \cite{MR859105}.
In \S \ref{xbu9Sgt5wv} we elaborate on the limitations of this approach compared to the present work,
and explain why the overhead $\sqrt{n}\cdot\log n$ is severe in the case when the source $X$ is
compressible, which is the focus of algorithmic stream coding as discussed in Remark \ref{w4hupePuRP}.
 
{\bf Explanation of the $\sqrt{n}\cdot\log n$ bottleneck of G\'{a}cs.}
The coding methods discussed above can all be seen as derivatives of
the method of G\'{a}cs \cite{MR859105}, and this is the reason why they all have the
characteristic bottleneck  $\sqrt{n}\cdot\log n$.
G\'{a}cs' method was originally introduced in terms of effectively closed sets, while
Merkle and Mihailovi\'{c} \cite{jsyml/MerkleM04} presented it in terms of martingales.
Here we present a version of G\'{a}cs' method in terms of shortest descriptions 
(to our knowledge, the first in the literature), which codes
each stream $X$ into some $Y$ with oracle-use $K(X\restr_n)+\bigos{\hspace{-0.1cm}\sqrt{n}}$. 
The reason why the overhead
$\bigos{\hspace{-0.1cm}\sqrt{n}}$ is smaller than the original 
$\sqrt{n}\cdot\log n$ is that we do not ensure that the code $Y$ is algorithmically random, which
is a requirement in  \cite{codico16}.
However this original example provides a simple explanation of the main factor 
$\hspace{-0.1cm}\sqrt{n}$ in  G\'{a}cs' bottleneck.

First, we break the source $X$ into successive segments $(\sigma_i)$ so that $|\sigma_i|=i$.
Fix a universal optimal \pf machine $U$ that can also work with finite oracles,
and consider the \pf Kolmogorov complexity $\sigma\mapsto K(\sigma)$ 
and its conditional version $(\sigma,\tau)\mapsto K(\sigma\ |\ \tau)$ with respect to $U$.
For each $i$, we recursively define the shortest program $\sigma^{\ast}_i$ of
$\sigma_i$ as follows. Assuming that $\sigma^{\ast}_j, j<i$ have been defined,
let $\sigma^{\ast}_i$ be a shortest program of $\sigma_i$ with respect to $U$ and
{\em relative, \ie conditional to} the programs $\sigma^{\ast}_j, j<i$, so that
$|\sigma^{\ast}_i|=K(\sigma_i\ |\ \sigma^{\ast}_j, j<i)$.  
Here note that from the concatenation $\sigma^{\ast}_0\ast\cdots\ast\sigma^{\ast}_k$
we can effectively compute the set $\sigma^{\ast}_j, j<k$ due to the fact that $U$ is a
\pf machine. Then given $\sigma^{\ast}_j, j<k$ we can compute $\sigma_j, j<k$
since these programs are the required finite oracles for the \pf machine $U$.
By the same observation it follows that the set $\sigma^{\ast}_j, j<i$ in the conditional part of
the \pf complexity may be viewed equivalently as the string 
$\sigma^{\ast}_0\ast\cdots\ast\sigma^{\ast}_k$, in case $U$ can only work with a single string as an oracle.
 
The required code $Y$ of the source $X$ is the stream $\sigma^{\ast}_0\ast\sigma^{\ast}_1\ast\cdots$.
It remains to show that the first $n$ bits of $X$ can be computed from the first
$K(X\restr_n)+\bigos{\hspace{-0.1cm}\sqrt{n}}$ 
many  bits of $Y$. 
Let $k_n$ be the least number such that $\sum_{i<k_n} |\sigma_i|\geq n$, \ie the least number of
blocks that we need for the calculation of the first $n$ bits of $X$.
Since $|\sigma_i|=i$ for each $i$, we have
$|\sigma^{\ast}_0\ast\cdots\ast \sigma^{\ast}_{k_n}|\approx k_n^2$. In other words,
the first $n$ bits of $X$ are contained within the first 
$\bigos{\hspace{-0.1cm}\sqrt{n}}$ segments $\sigma_j, j\in\Nat$.
Hence in order to show that the
first $n$ bits of $X$ can be computed from the first
$K(X\restr_n)+\bigos{\hspace{-0.1cm}\sqrt{n}}$ 
many  bits of $Y$ it suffices to show that 
\begin{equation}\label{q8VAJEkZbe}
|\sigma^{\ast}_0\ast\cdots\ast \sigma^{\ast}_{k_n}|=
K(\sigma^{\ast}_0\ast\cdots\ast\sigma^{\ast}_{k_n})+\bigo{{k_n}}
\end{equation}
If we let $K(\rho_0,\cdots, \rho_{k-1})$ denote the \pf complexity of the ordered set of strings,
$\rho_j, j<k$, by the symmetry of information \cite{Gacs:74} we have
\begin{equation}\label{LZzFYQfkVf}
|\sigma^{\ast}_0\ast \sigma^{\ast}_1|=
K(\sigma^{\ast}_0)+K(\sigma^{\ast}_1\ |\ \sigma^{\ast}_0)+\bigo{1}=
K(\sigma^{\ast}_0,\sigma^{\ast}_1)+\bigo{1}=
K(\sigma^{\ast}_0\ast\sigma^{\ast}_1)+\bigo{1}.
\end{equation}
By iterating this argument, for each $n>0$ we get
\[
K(\sigma^{\ast}_0)+K(\sigma^{\ast}_1\ |\ \sigma^{\ast}_0)+\cdots+ 
K(\sigma^{\ast}_k\ |\ \sigma^{\ast}_j, j<k_n)=
K(\sigma^{\ast}_0,\cdots,\sigma^{\ast}_{k_n})+\bigo{k_n}.
\]
so \eqref{q8VAJEkZbe} holds and the oracle-use 
in the computation of $X$ from $Y$ is $K(X\restr_n)+\bigos{\hspace{-0.1cm}\sqrt{n}}$ as required. 

Hence, at least with the current choice of block-lengths, 
the only way that the overhead $\bigos{\hspace{-0.1cm}\sqrt{n}}$ 
in the above argument can be reduced
is if the constant overhead $\bigo{1}$ of the symmetry of 
information principle in \eqref{LZzFYQfkVf} can be eliminated completely,
\ie can be made 0 (at least with respect to {\em some} universal \pf machine).
It is not hard to show, and it is widely known, that this is impossible.
In the following we show why opting for different block-lengths than G\'{a}cs' choice of $|\sigma_i|=i$
can only increase the overhead.

{\bf Why different block-lengths do not reduce the overhead in G\'{a}cs' coding.}

As discussed above,
the computation of $X\restr_n$ requires
the segment $\sigma^{\ast}_0\ast\cdots\ast \sigma^{\ast}_{k_n}$. 
Taking into account the overheads as before (one for each block), for arbitrary block-lengths $|\sigma_i|$,
\begin{equation}\label{bCyu4miz4T}
\textrm{we need oracle-use $n+\bigo{k_n}+|\sigma_{k_n}|$.}
\end{equation}
The secondary overhead $|\sigma_{k_n}|$ in this calculation
was not mentioned under the case when $|\sigma_i|=i$ because in this case it is $k_n$
so it can be absorbed in $\bigo{k_n}$.
The intuition for the added overhead $|\sigma_{k_n}|$
is that,
due to the fact that $n$ could be slightly larger than
$\sum_{i< k_n} |\sigma_i|$, and since we have chosen to block-code $X$, for the calculation of $X\restr_n$
we need the last block $\sigma_{k_n}$ which may contain bits of $X$ that are larger than $n$.
Hence there is a trade-off between the length of blocks and the size of the original overhead
(which depends on the number of blocks below $n$): longer blocks reduce the original overhead
(less blocks in $X\restr_n$) but increase the secondary overhead $|\sigma_{k_n}|$ in the above calculation.
Smaller blocks reduce $|\sigma_{k_n}|$ but increase the number of blocks $k_n$ in $X\restr_n$, 
hence the original overhead.
By choosing  $|\sigma_i|=i$, the oracle-use for $X\restr_n$ becomes 
$n+\bigo{k}+|\sigma_{k_n}|=n+\bigo{k_n}\approx \bigos{\hspace{-0.1cm}\sqrt{n}}$.

If we choose smaller blocks than $|\sigma_i|=i$ for each $n$ we will have more blocks in $X\restr_n$
so the primary overhead $\bigo{k_n}$ in \eqref{bCyu4miz4T} can only increase.
On the other hand, small increases such as $|\sigma_i|=2i$ do not make any difference since,
for example, the secondary overhead $|\sigma_{k_n}|$ becomes $2k_n$.
If we choose larger blocks such as $|\sigma_i|=i^2$,  
the number of blocks decreases to about $k_n\approx n^{1/3}$, but then the secondary overhead
$|\sigma_{k_n}|$ in \eqref{bCyu4miz4T} becomes $n^{2/3}$ which is even worse than the 
$\bigos{\hspace{-0.1cm}\sqrt{n}}$ that
we had before. Exponential-sized blocks such as  $|\sigma_i|=2^{i}$ increase the oracle-use even more,
since in this case the number of blocks in $X\restr_n$ are approximately $\log n$ and the
secondary overhead $|\sigma_{k_n}|$ in \eqref{bCyu4miz4T} becomes $2^{\log n}=n$, which is much worse than
the $\bigos{\hspace{-0.1cm}\sqrt{n}}$ that we had with the choice $|\sigma_i|=i$.

We have shown that G\'{a}cs' choice  of  $|\sigma_i|=i$ is essentially optimal with respect to his block-coding method.
A different and more thorough analysis of the limitations of the block-coding of G\'{a}cs,
in his original formulation in terms of effectively closed sets, can be found in 
\cite{codico16}.

\subsection{Coding binary streams into algorithmically random streams}\label{6CMdw8OhFH}
Intuitively speaking, if a string or stream is  sufficiently algorithmically random,
then it should not be possible to extract any `useful' information from it. 
Plenty of technical results that support this intuition have been established
in the literature, for various levels and notions of randomness.\footnote{For example, Stephan 
\cite{MR2258713frank} showed that incomplete \ml random
binary streams  
cannot compute any complete extensions 
of Peano Arithmetic; similar results are presented in 
Levin \cite{Levin:2013:FI}. A simple example showing that sufficiently random strings
cannot be decompressed into anything, is presented in \cite{codico16}.}
In stark contrast,  Ku\v{c}era \cite{MR820784} 
and G\'{a}cs \cite{MR859105} showed that every stream is computable from a \ml random stream (a result that
is now known as the \KG theorem).
Bennett \cite{Bennett1988} views this result as the infinitary 
analogue of the classic fact that every string can be coded into an algorithmically
random string, namely its  shortest 
description. 
Doty \cite{Doty08}, quite correctly, points out that this analogy is missing a rather crucial
quantitative aspect: according to the classic fact, every string $\sigma$ is computable from
a random string $\sigma^{\ast}$ of length $K(\sigma)$, although the coding provided by
 Ku\v{c}era \cite{MR820784} 
and G\'{a}cs \cite{MR859105} leaves much to be desired regarding the number of bits required from the
random oracle in order to recover a given number of bits of the source. 
The analogue of `length' for codes in the infinitary setting is the oracle-use function $n\mapsto f(n)$ that determines the length of the initial segment of $Y$ which is queried during a computation of $X\restr_n$. 
A quantitative version of Bennett's analogy would ask that
every stream $X$ has an algorithmically random code $Y$ which computes it with oracle-use 
close to $n\mapsto K(X\restr_n)$. 
The coding methods of  Ku\v{c}era and G\'{a}cs fall short of facilitating such a strong result in 
two ways:
\begin{enumerate}[\hspace{0.5cm}(a)]
\item the oracle-use is  oblivious to the stream being coded;
\item this uniform oracle-use is considerably higher than the initial segment complexity of any stream;
\end{enumerate}
Clause (b) was the main topic of discussion in Barmpalias and Lewis-Pye \cite{codico16}, where
it was pointed out that Ku\v{c}era's coding gives oracle-use $n\log n$ and G\'{a}cs' refined coding
gives oracle-use $n+\sqrt{n}\log n$. In the same article it was demonstrated that these methods 
(and their generalisations) cannot be extended
in order to give significantly smaller oracle-uses. 



Doty \cite{cie/Doty06,Doty08} tackled this challenge with respect to (a) above, 
by combining the ideas of Ryabko \cite{Rya84,Rya86}
and the coding of  G\'{a}cs \cite{MR859105}, in order to produce an adaptive coding of any stream $X$ into
an algorithmically random stream $Y$, where the oracle-use $f$ in the computation of $X$ from $Y$
reflects the initial segment complexity of $X$, in the sense that \eqref{F6Ihiv38MU} holds.
So Doty provided a way to code any stream $X$ into an algorithmically random stream,
achieving decompression ratio equal to the effective Hausdorff dimension of $X$.
Doty's work provides a quantitative form of Bennett's analogy, but is still a step away
from the direct analogy of obtaining oracle-use close to 
$n\mapsto K(X\restr_n)$ for the computation
of a source $X$ from its random code  -- the requirement on the oracle use $f$ that the \emph{ratio} $f(n)/n$ 
should be asymptotically equal to $K(X\restr_n)/n$ is much weaker.\footnote{For the sake of comparison, if the effective Hausdorff dimension
of $X$ is 1, then Doty's method codes $X$ into a random stream $Y$ which computes $X$
with oracle use $n\mapsto n+\sqrt{n}\log n$, \ie the same as in 
G\'{a}cs \cite{MR859105}. In other words, although Doty \cite{cie/Doty06,Doty08} provides an adaptive coding
where the dimension of the source is reflected in the oracle-use, he does not provide
better worse-case redundancy than G\'{a}cs \cite{MR859105}.} 
A more recent attempt by Barmpalias and Lewis-Pye \cite{optcod16}, using very different methods, 
focused on tackling clause (b)
by producing a coding method (oblivious in the sense of (a) above) 
which achieves logarithmic worse-case redundancy, namely oracle-use 
$n+\epsilon\cdot\log n$ for any $\epsilon>1$. 
Barmpalias, Lewis-Pye and Teutsch \cite{bakugaopt}
showed that this is strictly optimal with respect to oblivious oracle-use, \ie there exist
streams which are not codable into any algorithmically random stream with
redundancy $\log n$.
Despite the simplicity and optimality of this new coding technique, it falls short of
dealing with clause (a) above.

\subsection{Novelty and explicit comparison  with existing work in the literature}\label{xbu9Sgt5wv}
In order to compare our work with the state-of-the-art, recall that the goal achieved in the present work is 
\begin{equation}\label{hm9v8PK19R}
\parbox{14.5cm}{to code all streams $X$ into algorithmically
random $Y$, so  that the length of $Y$ that is required to recover the first
$n$ bits of $X$ is essentially $K(X\restr_n)$, \ie the information contained in $X\restr_n$.}
\end{equation}
In this sense, our work deals with both issues (a) and (b) discussed in \S \ref{6CMdw8OhFH}, 
from which all currently known approaches to coding into
algorithmically random streams suffer.

{\bf Optimality of our results.}
If $X$ is computable from $Y$, then for almost all $n$
the oracle use $g(n)$ must be larger than $K(X\restr_n)-2\log n$. In order to see this, note that
$\lim_n (2\log n -K(n))=\infty$ and 
each $X\restr_n$ can be described with the prefix-free code
that starts with a shortest \pf description of $n$, concatenated with $Y\restr_{g(n)}$.
Precisely speaking, Theorem \ref{Z6AicbrzJm} is tight modulo $3\log n$.

{\bf Comparison with G\'{a}cs-Ryabko-Doty.}
The state-of-the-art result towards \eqref{hm9v8PK19R} was, until now,
Doty \cite{Doty08}, where the oracle-use obtained falls short of the target
 $K(X\restr_n)$ by $\smo{n}$. Unfortunately, the statements of the results in Doty \cite{Doty08}
 do not state the exact value of the error  $\smo{n}$, but an analysis of the proofs shows that
 this is  $\sqrt{n}\cdot\log n$ at best.  Doty's approach is an amalgamation of the method of
Ryabko \cite{Rya84,Rya86} and the block-coding of
G\'{a}cs \cite{MR859105}, which explicitly states an overhead of  $3\sqrt{n}\cdot\log n$.
In \cite{codico16}  it was demonstrated that, although G\'{a}cs' bound can be reduced to
$\sqrt{n}\cdot\log n$ by more careful calculations, no substantial improvement is possible using this approach.
We may conclude that the overhead  $\sqrt{n}\cdot\log n$ is intrinsic in the existing coding methods,
and compare it with the overhead $\log n$ of our Theorem \ref{Z6AicbrzJm}. Given that
even the worst-case possible oracle-use is less than $n+2\log n$, the quantity
$\sqrt{n}\cdot\log n$ that we shave-off from the overhead is considerable.
If we consider sources $X$ which are compressible at a certain rate, \ie
the Kolmogorov complexity of their initial segments is at most $g$
(\eg $7\log n$ or $3(\log n)^2$) then our Corollary \ref{FxNsrDIzhj} gives
oracle-use {\em exactly} $g$ while Doty \cite{Doty08} gives 
$g+\sqrt{n}\cdot\log n$ at best; in terms of overheads, it is $\sqrt{n}\cdot\log n$ versus 0. 
In such situations, the oracle-use provided
by previous methods is overwhelming compared to the actual information that is being coded,
but also overwhelming compared to the oracle-use given by our method.

{\bf Comparison with the worse-case bounds of G\'{a}cs and Barmpalias and Lewis-Pye.}
A universal upper bound on the \pf initial segment complexity of every stream is $n+2\log n$, or even
 $n+\epsilon\cdot \log n$ for any $\epsilon>1$. Hence a weaker version of \eqref{hm9v8PK19R}
 would be to devise a method which codes
each $X$ into an algorithmically
random stream $Y$, in such a way that the number of bits of $Y$ that are required to recover the first
$n$ bits of $X$ is $n+2\log n$, \ie essentially the worse-case initial segment complexity.
This is what we described as {\em oblivious oracle-use} in \S\ref{6CMdw8OhFH}, \ie the oracle-use is
fixed as a function and does not depend on the stream being coded, or its complexity.
The methods of Ku\v{c}era \cite{MR820784}
and G\'{a}cs \cite{MR859105} were of this type, with the first one achieving 
oracle-use $2n$ and the later,  oracle-use 
$n+3\sqrt{n}\log n$, \ie {\em redundancy} $3\sqrt{n}\log n$.
These methods may be described as
 forms of block-coding, in the sense that some
increasing computable 
sequence $(n_i)$ is chosen, which splits  the source stream $X$ into countably many blocks, 
which are coded into corresponding blocks in a code $Y$ (determined by another
increasing sequence $(m_i)$).
Ku\v{c}era \cite{MR820784} choses $n_i=i$ while G\'{a}cs 
\cite{MR859105} considers $m_i\approx i^2$.
In Barmpalias and Lewis-Pye \cite{codico16} it was demonstrated that
these methods cannot give redundancy less than $\sqrt{n}\log n$,
which can be seen as the sum of logarithmic overheads for each block 
on the code $Y$, where the $i$th block has length $i$. 

Recently, Barmpalias and Lewis-Pye \cite{optcod16}  used a different method 
in order to obtain oblivious bounds such as $n+2\log n$ 
(even $n+\epsilon\cdot \log n$ for any $\epsilon>1$) and in \cite{bakugaopt}
it is shown that these worst-case oblivious bounds are optimal (even up to extremely small
differences such as $\log\log\log n$ -- see \cite{optcod16,bakugaopt} for the exact characterization).
We note that these oblivious bounds are also obtained  via our Corollary
\ref{FxNsrDIzhj}, which is however a much stronger result and is based on the 
considerably more sophisticated method of \S\ref{iKCR6nBLqB}.

{\bf A defining aspect of our coding which allows to shave-off  G\'{a}cs'  overhead of $\sqrt{n}\log n$.}
As discussed above, all of the existing coding methods for \eqref{hm9v8PK19R} (or its weaker, worst-case form)
carry an overhead of at least  $\sqrt{n}\log n$ in the oracle-use. The reason for this is that they are all
derivatives of G\'{a}cs' method, which cannot give better bounds as it was demonstrated in  \cite{codico16}.
There is a specific feature in our method of \S\ref{iKCR6nBLqB} and, in a simple form, in  \cite{optcod16},
which is absent from all the above forms of block-coding and
which allows the elimination of this overhead. In any derivative of
G\'{a}cs'  method, the source $X$ and the code $Y$ are split into blocks 
of lengths $n_{i}-n_{i-1},m_{i}-m_{i-1}$ respectively for the $i$th block, where $(n_i), (m_i)$ are increasing 
sequences (possibly depending on $X$), and for each $i$
\begin{equation}\label{ba23HDmLNi}
\parbox{11cm}{ the segment $Y\restr_{m_i}$ of the code is uniquely specified by the
segment $X\restr_{n_i}$ of the source and the set of incompressible sequences of length $\leq m_i$.}
\end{equation}
In other words,  if the coding produces $c$-incompressible codes $Y$ 
(\ie such that $K(Y\restr_i)\geq i-c$ for all $i$) given the segment $X\restr_{n_i}$  that is being coded and
the set of  $c$-incompressible strings  of length $\leq m_i$, we can recover the code
 $Y\restr_{m_i}$.
This uniqueness property is no longer present in our coding method of  \S \ref{iKCR6nBLqB}, 
and this is the defining novel characteristic
which allows for the elimination of the bottleneck $\sqrt{n}\log n$. 

\section{The layered Kraft-Chaitin theorem for infinitary coding}\label{iKCR6nBLqB}
The Kraft-Chaitin theorem, which is an effective version of Kraft's inequality, 
is an indispensable tool for the construction of \pf codes.\footnote{Kraft's inequality is from  
\cite{Kraft:49} and features in many textbooks such as \cite[\S 1.11.2]{Li.Vitanyi:93}.
The Kraft-Chaitin theorem was first used in \cite{Schnorr:73,levinphd,levinthesistran,MR0411829}; 
also see \cite[\S 3.6]{rodenisbook} for a clear presentation and some history.} 
This section is devoted to
what might be thought of as a \emph{nested} version of this classic result, which we call the {\em layered
Kraft-Chaitin theorem}, and which can be used in order to produce infinitary codes. 
 Despite its additional sophistication, the proof of our layered
Kraft-Chaitin theorem is based on similar ideas to the proof of the classic
Kraft-Chaitin theorem. For this reason, we start in \S \ref{6Azke7oXF} by formally stating
certain notions associated with this classic result and its proof. This
proof is based on a particularly succinct presentation in \cite[\S 3.6]{rodenisbook}, 
where it is partially credited to Joseph S.~Miller. Even if the reader is familiar with the Kraft-Chaitin theorem
and its proof, we recommend reading through \S \ref{6Azke7oXF} which introduces terminology
which will be used freely in later sections.

\subsection{Plain Kraft-Chaitin requests and the greedy solution}\label{6Azke7oXF}
The Kraft-Chaitin theorem can be viewed as providing a greedy online algorithm for
satisfying a sequence of \emph{requests}. The satisfaction of each request
 requires that a string be produced of a certain length (as specified by the request), 
 and which is incompatible with all strings used to satisfy previous requests.  

\begin{defi}[Kraft-Chaitin sequence of requests]\label{SjtisRRsER}
A \emph{Kraft-Chaitin} (KC) \emph{sequence} is a finite or infinite sequence of positive integers
$\tuple{\ell_i, i<k}$ where $k\in\Nat\cup\{\infty\}$. 
We say that a sequence $\tuple{\sigma_i, i<k}$ of strings  is a \emph{solution to} 
the KC-sequence $\tuple{\ell_i, i<k}$,
if $|\sigma_i|=\ell_i$, $\sigma_i\neq\sigma_j$ for all $i\neq j$ and the set $\{\sigma_i\ |\ i<k\}$ is prefix-free. 
\end{defi}

In the next section we will define
a more general version of KC-sequences. For this reason,
we also refer to the notion of Definition \ref{SjtisRRsER} as a {\em plain} KC-sequence and its
terms as {\em plain requests}.

\begin{defi}[Weight and trace of a KC-sequence]\label{vgGN7fsYKH}
The weight of a KC-sequence $L=\tuple{\ell_i, i<k}$ is 
$\sum_{i<k} 2^{-\ell_i}$, and is denoted by $\wgt{L}$.  
The trace of $L$ is the binary expansion of $1-\sum_{i< k} 2^{-\ell_i}$, 
as a binary stream or string, depending on whether the length $k$ of the sequence is infinite or finite.
\end{defi}

The Kraft-Chaitin theorem says that every computable  KC-sequence 
 $\tuple{\ell_i, i<k}$  with weight at most $ 1$ has a computable solution. By 
 Kraft's inequality, if the weight of the KC-sequence is more than 1, then it does not have a solution. 
 To give a proof for the theorem we define a 
 {\em greedy strategy}, which  constructs a solution for any 
 KC-sequence with weight at most 1. This strategy enumerates a solution 
 $G_t=\tuple{\sigma_i, i< t}$ to each initial segment of the given KC-sequence of length $t$,
 and is defined inductively on the length of the
 KC-sequence. 
 
 Given a string $\sigma$,  let 
 $\dbra{\sigma}$ denote all binary streams which have $\sigma$ as a prefix; similarly, if $H$ is
 a set of binary strings, we let $\dbra{H}$ be the union of all $\dbra{\sigma}$, $\sigma\in H$.
 The strategy is based on 
 monitoring the trace of the  KC-sequence, while also constructing an auxiliary sequence $(F_t)$ of sets of
 strings such that for each $t$:
 \begin{equation}\label{lnWETnVBbs}
\parbox{13cm}{$\dbra{G_t\cup F_t}=2^{\omega}$, $G_t\cup F_t$ is prefix-free, and
there is a one-to-one map $i_j\mapsto \mu_j$ from the positions of the 1s in the trace of 
$\tuple{\ell_i, i<t}$ onto the set $F_t$ such that $|\mu_j|=i_j$.}
\end{equation}
We call the $F_i$ {\em filler sets}. The intuition is that they represent the strings
which are available to be used in extending the current solution to an initial segment of the KC-sequence
(either the actual strings in $F_i$ or their extensions).
Let $\ast$ denote
the concatenation of strings.
\begin{defi}[Greedy solution]\label{pXJMi4x4i}
Given a KC-sequence $\tuple{\ell_i, i< k}$ with weight at most 1,
the greedy solution $G_k=\tuple{\sigma_i, i< k}$ is defined inductively. 
We define  $G_1=\{0^{\ell_0}\}$ and
$F_1=\{1,01, \dots, 0^{\ell_0-1}1\}$. Assuming that $G_t,F_t$ have been defined 
and satisfy \eqref{lnWETnVBbs}, we define $G_{t+1},F_{t+1}$ as follows. 
Note that there exists a 1 in the trace of  $\tuple{\ell_i, i< t}$ on a position which is
at most $\ell_{t}$, otherwise the weight of  $\tuple{\ell_i, i< t+1}$ would exceed 1.
Let $p$ be the largest such position and consider the string $\mu\in F_t$ which corresponds
to this position according to  \eqref{lnWETnVBbs}. Then:
\begin{itemize}
\item Let $\sigma$ be the leftmost extension of $\mu$ of length $\ell_{t}$, namely
$\mu\ast 0^{\ell_{t}-p}$;
\item If $p<\ell_{t}$ then let $R$ be the set of strings  $\{ \mu\ast 0^{\ell_{t}-p-i}\ast 1: \ 0<i\leq \ell_{t}-p\}$,  and if $p=\ell_t$ then let $R=\emptyset$;
\item Define $F_{t+1}=F_t\cup R-\{\mu\}$ and $\sigma_{t}=\sigma$.
\end{itemize}
Note that  \eqref{lnWETnVBbs}
continues to hold for $t+1$ in place of $t$. This concludes the definition of  $G_{t+1}$.
\end{defi}
If we want to emphasise that the algorithm just described gives a solution to a {\em plain} KC-sequence,
we refer to this solution as the {\em plain} greedy solution (in \S \ref{AGapDZxJHM} we will
discuss the {\em layered} greedy solution).

\subsection{Relativized plain KC-sequences and their greedy solution}\label{f9OYHFR8ER}
All notions  discussed in \S \ref{6Azke7oXF} have straightforward
relativizations as follows.
A KC-sequence $\tuple{\ell_i, i<k}$ relative to a string $\tau$ is defined exactly as in Definition 
\ref{SjtisRRsER}, except that each $\ell_i$ is interpreted as the request `produce an extension
of $\tau$ of length $\ell_i$'. 
Let $\preceq, \prec$ denote the prefix and proper prefix (\ie prefix and not equal to) relations amongst strings.
A solution $\tuple{\sigma_i, i<k}$ to $\tuple{\ell_i, i<k}$ is defined
as in Definition \ref{SjtisRRsER}, with the extra condition that 
$\tau\preceq\sigma_i$ for all $i$.
The weight of a KC-sequence $\tuple{\ell_i, i<k}$ relative to $\tau$ is again given by 
$\sum_{i<k} 2^{-\ell_i}$, the trace is the binary expansion of $2^{-|\tau|}-\sum_{i<k} 2^{-\ell_i}$, 
and the relativized Kraft-Chaitin theorem says that a
KC-sequence $\tuple{\ell_i, i<k}$ relative to $\tau$ has a solution, provided that
its weight is at most $2^{-|\tau|}$.
The greedy solution to 
a KC-sequence $\tuple{\ell_i, i<k}$ relative to $\tau$ is defined exactly\footnote{The
remark `Note that there exists\dots' in  Definition \ref{pXJMi4x4i} now should be 
`Note that there exists \dots, otherwise the weight of  $\tuple{\ell_i, i< t+1}$ would exceed $2^{-|\tau|}$'.} 
as in Definition \ref{pXJMi4x4i} with the only difference that we now start with:
\[
\textrm{$G_1=\{\tau\ast 0^{\ell_0-|\tau|}\}$
\hspace{0.3cm} and \hspace{0.3cm}
$F_1=\{\tau\ast1, \tau\ast01, \dots, \tau\ast0^{\ell_0-1-|\tau|}1\}$.}
\]
As a result, the solutions $G_i$ and the filler sets $F_i$ now consist entirely of
extensions of $\tau$. 
We refer to a plain KC-sequence relative to a string as a
{\em relativized plain KC-sequence} if we want to emphasise the difference with the
notion of Definition \ref{SjtisRRsER}.

\subsection{Layered Kraft-Chaitin requests}\label{TRNDMYrIfL}
Informally speaking, a layered Kraft-Chaitin request could be a
plain request of the form ``produce a string $\sigma$ of length $\ell$'' such as in Definition \ref{SjtisRRsER},
but could also be a nested request of the form ``produce a string $\sigma$ of length $\ell$
which is a proper extension of a string that was used in order to satisfy a certain previous request''.
So a single layered request may actually involve a long sequence of previous requests, 
the length of which determines 
the {\em depth} of the request. Prefix-freeness is required just as in 
Definition \ref{SjtisRRsER}, except that now we only require it layer-wise, \ie amongst
requests of the same depth. A layered request is now represented by a tuple
$(u,\ell)$ whose second coordinate is the requested length, while the  first coordinate is the
index of the previous request that it points to, according to the informal discussion above.

\begin{defi}[Layered Kraft-Chaitin requests]\label{uK47ZAOVAK}
A layered KC-sequence is a finite or infinite sequence $\btuple{r_i=(u_i,\ell_i), i<k}$, 
where $k\in\Nat\cup\{\infty\}$, $\ell_i\in\Nat$,
such that $(u_0,\ell_0)=(\ast,0)$ and for each $i>0$ we have $u_i\in\{ 0, \dots, i-1 \}$ and $\ell_i>\ell_{u_i}$.
The request $r_0=(u_0,\ell_0)$ is said to be the empty request and is the only 0-depth request.
If $i>0$ and request $r_{u_i}$ is a $j$-depth request, then
$r_i$ is a $(j+1)$-depth request. The length of $(u_i,\ell_i)$ is $\ell_i$.
\end{defi}
Given requests $(u_i, \ell_i)$, $(u_j, \ell_j)$ such that $u_j=i$, we say that
$(u_j, \ell_j)$ \emph{points to} $(u_i, \ell_i)$. This relation defines a partial order, a tree, amongst
the layered KC-requests. Note that in Definition \ref{uK47ZAOVAK}
we require that the length $\ell_i$ of each request $(u_i,\ell_i)$ should be strictly larger
than the length of the request that it points to.
The empty request does not have any meaning and it only exists for notational convenience.
The $1$-depth requests all point to the empty request, and 
can be viewed as the plain KC-requests of Definition \ref{SjtisRRsER}.

\begin{defi}[Predecessor and successor requests]\label{iAEntPEpUs}
Given two requests $(u_i,\ell_i), (u_j,\ell_j)$ in a  
layered KC-sequence, we say that $(u_i,\ell_i)$ is an immediate predecessor of $(u_j,\ell_j)$ (and that $(u_j,\ell_j)$ is an immediate successor of $(u_i,\ell_i)$)
if $u_j=i$, \ie if $(u_j,\ell_j)$ points to $(u_i,\ell_i)$. 
\end{defi}
We must formally define what is meant by a solution to a
layered KC-sequence.  According to this definition,
drawing a parallel with Definition \ref{SjtisRRsER}, a layered request $(u_i, \ell_i)$
may be satisfied by \emph{several} 
strings of length $\ell_i$ in the solution. 
The reason for this feature will become clear in the discussion after Definition \ref{etNHm5OGHS}.
Recall that $\preceq,\prec$ denote the prefix and proper prefix relations amongst strings.
\begin{defi}[Satisfaction of layered KC-requests]\label{jPdiYuWBqz}
Suppose that  $\btuple{(u_i,\ell_i), i<k}$
is a  layered KC-sequence and 
for each $t\in\Nat$, let $I_t$ be the set of indices $i$ such that $r_i=(u_i,\ell_i)$ is a $t$-depth request.
We say that  a sequence 
$\tuple{S_{i}, i<k}$ of sets of strings
satisfies, or is a solution to the KC-sequence $\btuple{(u_i,\ell_i), i<k}$, if
for each $i,j<k$ with $i\neq j$:
\begin{enumerate}[\hspace{0.5cm}(a)]
\item $S_i$ consists of strings of length $\ell_i$ and  if $i>0$, 
for every $\sigma\in S_i$ there exists $\tau\in S_{u_i}$ such that  $\tau\prec\sigma$.
\item  if $u_i=u_j$ then for each $\sigma\in S_i$, $\tau\in S_j$ we have
$\sigma\not\preceq\tau$ and $\tau\not\preceq\sigma$.
\end{enumerate}
\end{defi}

Note that in Definition \ref{uK47ZAOVAK}, we require $\ell_i>\ell_{u_i}$.  
Condition (b) in Definition \ref{jPdiYuWBqz} implies that $S_j\cap S_i=\emptyset$ for each
$j\neq i$, when $\tuple{S_{i}, i<k}$ is a solution to a 
layered KC-sequence. Indeed, if
$\sigma\in S_i\cap S_j$, by the monotonicity of the lengths in
 Definition \ref{uK47ZAOVAK} it follows that $i$ is not an ancestor
 of $j$ in the tree of all layered KC-requests, and vice-versa. Hence if
 $x$ is the
greatest common ancestor  of the $i$th and $j$th requests, 
we have $x<i$, $x<j$.
If  $i',j'$ are the unique ancestor requests of $i,j$ respectively that point to $x$,
then  $i\neq j$ implies $i'\neq j'$. 
This contradicts
(b) in Definition \ref{jPdiYuWBqz}. 
By the same argument, if $i\neq j$, the $i$th and the $j$th request
have the same depth and $\sigma\in S_i, \tau\in S_j$ we have
$\sigma\not\preceq\tau$ and $\tau\not\preceq\sigma$.
since $\sigma\in S_i\cap S_j$ would have a prefix in $S_{i'}$ and a prefix in 
 $S_{j'}$. In other words, $\cup_{j\in I_t} S_j$ is prefix-free for each $t\in\Nat$.

 Note also that Definition \ref{uK47ZAOVAK} is a generalisation of  Definition
\ref{SjtisRRsER}. In particular, a layered KC-sequence consisting entirely of (the empty request and) $1$-depth requests
can be identified with a plain KC-sequence. 

\begin{defi}[Weight of a layered KC-sequence]\label{etNHm5OGHS}
Given a layered KC-sequence
$\btuple{(u_i,\ell_i), i<k}$ we define its weight as
$\sum_{i\in (0,k)} 2^{-\ell_i}$.
\end{defi}

In analogy with the classic Kraft-Chaitin theorem, 
we wish to show that if a layered KC-sequence has appropriately bounded weight, then it has
a solution. By the classic Kraft-Chaitin theorem,
every layered KC-sequence consisting entirely of (the empty request and) $1$-depth requests has a solution
\emph{consisting of singletons}, provided that its weight is at most 1. 
This is no longer true, however,  if the  layered KC-sequence
contains deeper requests. Consider, for example,  the 
layered KC-sequence $(\ast, 0),(0, 2), (1,3),(1,3),(1,3),(1,3)$
which has weight $2^{-2}+2^{-1}<1$. Here the second request is a 1-depth request, while the last four
requests are 2-depth requests. A solution to this layered KC-sequence consisting of
singletons would necessarily 
involve a string $\sigma$ of length 2 for the 1-depth request, and four strings of length 3
which extend $\sigma$. Clearly this is impossible, so this
layered KC-sequence  does not have a solution consisting of singletons.
It does, however, have the solution  
$S_0=\{\emptyset\}$, 
$S_1=\{00, 01\}$, 
$S_2=\{000\}$,
$S_3=\{001\}$,
$S_4=\{010\}$,
$S_5=\{011\}$. This is the reason that we allow layered requests to be satisfied by \emph{sets} of strings rather than by individual strings. 

\begin{defi}[Uniform solution]
A uniform solution to a KC-sequence
 $\tuple{(u_i,\ell_i), i<k}$ is a double sequence $(S_i[t])$ of sets of strings
 such that $S_i[t]\subseteq S_i[t+1]$ for all $i,t<k$, and for each $t<k$
 the sequence $\tuple{S_i[t], i<t}$ is a solution to the KC-sequence 
 $\tuple{(u_i,\ell_i), i<t}$. 
\end{defi}

Note that if $(S_i[t])$ is a uniform solution to
$\tuple{(u_i,\ell_i), i\in\Nat}$, then if $S_i:=\lim_t S_i[t]$,
 the sequence $(S_i)$ is a solution to 
$\tuple{(u_i,\ell_i), i\in\Nat}$
in the sense of Definition \ref{jPdiYuWBqz}.
Our goal now is to prove the following theorem.

\begin{thm}[Layered KC-theorem]\label{vOPTnu8Le8}
Every layered KC-sequence of weight at most $1$ has a uniform solution.
If, in addition, the layered KC-sequence is computable, then there exists a uniform solution which is computable. 
\end{thm}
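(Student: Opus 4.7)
The plan is to adapt the greedy-filler construction of Definition \ref{pXJMi4x4i} to the nested setting. For each processed request $r_j$ I would maintain, at every stage $t$, a solution set $S_j[t]$ and a filler set $F_j[t]$ of strings of length $\ell_j$, satisfying nested analogues of the invariants in \eqref{lnWETnVBbs}. Specifically: (i) $S_j[t]\cup F_j[t]$ is a prefix-free antichain consisting of proper extensions of strings in $S_{u_j}[t]$; (ii) for any two distinct processed children $i\neq i'$ of a common parent, the sets $S_i[t]\cup F_i[t]$ and $S_{i'}[t]\cup F_{i'}[t]$ are mutually prefix-incomparable; (iii) the $S_j[t]$ are nondecreasing in $t$; and (iv) a measure-accounting invariant that the total cylinder measure $|S_j[t]\cup F_j[t]|\cdot 2^{-\ell_j}$ allocated at any node $j$ matches an ``effective weight'' of $r_j$'s subtree processed so far, which is bounded above by $2^{-\ell_j}+\sum_{i\,\text{descendant of}\,j}2^{-\ell_i}$.

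Requests would be processed online, in order. To handle a new request $r_{t+1}=(j,\ell)$ I would invoke a recursive \emph{make-room} procedure at node $j$: locate, within $F_j[t]$, a string that can be extended to a length-$\ell$ string which is prefix-incomparable with the currently allocated sibling sets (using a layered analogue of the trace bookkeeping of Definition \ref{pXJMi4x4i} applied to the children of $j$), and set $S_{t+1}[t+1]$ accordingly. If no such string is available in $F_j[t]$, promote strings from $F_j[t]$ into $S_j[t]$ and replenish $F_j[t]$ by recursively invoking make-room at the parent $u_j$, propagating upward if necessary. The recursion terminates at the root, whose initial filler $\{\lambda\}$ has measure $1$.

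The key lemma to establish is that the recursive make-room procedure never fails. This follows by induction on depth, together with the observation that the effective-weight demand imposed on the root by all its subtrees is dominated by the total weight $\sum_{i>0}2^{-\ell_i}\leq 1$; the root's initial budget of $1$ thus suffices. Uniformity and the monotonicity of $S_j[t]$ are built into the staged construction. Effectiveness in the computable case is immediate, since each stage performs only finitely many bounded operations on finite sets of strings whose lengths are dictated by the current (computable) initial segment of the layered KC-sequence.

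The main obstacle will be the nested bookkeeping itself. When a deep request triggers an enlargement of one of its ancestors' solution sets, the change must not disturb prefix-freeness among that ancestor's siblings, nor among the collateral branches of the tree already committed, and the effective-weight accounting has to balance at every depth. Formulating the correct layered analogue of \eqref{lnWETnVBbs} at every node of the request tree, and verifying that a single invocation of make-room preserves it in every invariant simultaneously, is the core technical difficulty of the proof.
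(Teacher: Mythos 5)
Your high-level design---maintain per-request solution and filler sets, propagate a recursive make-room upward, and argue by induction on depth that the root's unit budget suffices---is the right skeleton, and is close in spirit to what the paper does by attaching a plain relativized KC-sequence $L_\sigma$ to every code $\sigma$ and letting its greedy filler sets do the local bookkeeping. But there are two concrete gaps that prevent the proof from closing as written. First, your measure-accounting invariant (iv) is not preserved by the make-room step: every time you replenish $F_j$ you pay $2^{-\ell_j}$ at node $j$, but the triggering event is a new descendant request of weight only $2^{-\ell}$ with $\ell>\ell_j$, so after a single replenishment the allocated measure $\lvert S_j\cup F_j\rvert\cdot 2^{-\ell_j}$ already exceeds $2^{-\ell_j}+\sum_{i\text{ desc of }j}2^{-\ell_i}$. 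A correct accounting cannot charge a whole $\ell_j$-slot to each replenishment; it has to track partial slots down to all lengths between $\ell_j$ and the descendants' lengths, which is exactly what the plain greedy filler sets (and traces) in Definition~\ref{pXJMi4x4i} do and what your length-$\ell_j$-only $F_j$ cannot represent.

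Second, and more fundamentally, the claim that ``make-room never fails\dots follows by induction on depth together with the observation that the effective-weight demand\dots is dominated by the total weight'' is a non sequitur without a structural invariant controlling how the free space under the codes in $S_j$ is distributed. Total weight at most $1$ gives enough aggregate measure, but a na\"{i}ve allocation can fragment it: you can end up with two codes $\sigma_1,\sigma_2\in S_j$ each holding some free space, yet no single one has a clear extension of the required length, even though their combined trace would. The paper prevents this by always placing new descendants under the \emph{earliest} code with room (the choice of $j_0$ and $\sigma_{j_0}$ in Definition~\ref{NrJFDTH4f}) and proving the \emph{monotonicity-of-traces} lemma (Lemma~\ref{g36rcysNtq}): among codes of the same index, earlier ones have all their free-space 1s strictly to the right of later ones. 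This is the invariant that rules out fragmentation, and it is what you would need in place of, or in addition to, invariant (iv). Even with it in hand, the actual verification in \S\ref{ZMVKSMqTVQ} is not a one-line induction: it reduces a putative failure at depth $d+1$ to a failure of a carefully built depth-$d$ sequence $L'$ whose extra ``secondary'' requests are charged against the depth-$(d+1)$ weight using Lemma~\ref{g36rcysNtq}. Your sketch does not identify either the tie-breaking discipline, the trace-monotonicity invariant, or the depth-reduction construction, and without them the induction does not go through.
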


Before describing the proof, we introduce some useful terminology.

\begin{defi}[Characteristic sequence of a layered request]\label{ntlNU6bLF3}
Let $\tuple{(u_{i}, \ell_{i}), i<k}$ be a layered KC-sequence.
The \emph{characteristic sequence} of the empty request is 
$\tuple{v_j, j<1}$ with $v_0=0$ and
the characteristic sequence of a 1-depth request $(u_{i}, \ell_{i})$ is 
$\tuple{v_j, j<2}$ with $v_0=0, v_1=i$.
Inductively assuming that the characteristic sequence of every $j$-depth request has been defined,
we define the characteristic sequence of a $(j+1)$-depth request $r_i:=(u_{i}, \ell_{i})$
to be  characteristic sequence of $r_{u_i}$ concatenated with the term $i$.
\end{defi}
%

The notion of Definition \ref{iAEntPEpUs} can be transferred to 
strings in $S:=\cup_{i<k} S_i$, which we shall refer to as \emph{codes}.

\begin{defi}[Successor and predecessor codes]
Given a solution  $\tuple{S_i, i<k}$ to a layered KC-sequence $\tuple{(u_i,\ell_i), i<k}$
and two strings $\sigma,\tau\in S:=\cup_{i<k} S_i$, we say that $\sigma$ is
an immediate predecessor of $\tau$ (and that $\tau$ is an immediate successor of 
$\sigma$) if $\sigma\preceq\tau$ and there exist $j<t$ such that 
$\sigma\in S_j, \tau\in S_t$ and $(u_j,\ell_j)$ is an immediate predecessor of $(u_t,\ell_t)$.
If $\sigma\in S_i$ then we say that the index of $\sigma$ is $i$.
\end{defi}

\subsection{The greedy solution to a layered KC-sequence}\label{AGapDZxJHM}
The solution $\tuple{S_i,i<k}$ for Theorem \ref{vOPTnu8Le8} will be composed out of the 
greedy solutions of auxiliary plain (relativized) KC-sequences. 
In particular, each string $\sigma$ that is enumerated into some $S_i$,
corresponds to a plain KC-sequence $L_{\sigma}$ relative to $\sigma$ 
in the sense of \S \ref{f9OYHFR8ER}.
The idea is that any strings enumerated into $S:=\cup_j S_j$ for the satisfaction of a request
whose immediate predecessor is $(u_i,\ell_i)$, will be chosen by the greedy algorithm corresponding to 
$L_{\sigma}$ for some $\sigma\in S_i$.
Recall that $\wgt{L_{\sigma}}$ denotes the weight of $L_{\sigma}$. 
Table \ref{ta:vschelparbouatd} displays the main parameters of the greedy solution. 
\begin{defi}[Clear extensions]
Given a set $S$ of strings and strings $\sigma\preceq \tau$,
we say that $\tau$ is an $S$-clear extension of $\sigma$ if
there are no extensions of $\sigma$ in $S$ which are $\preceq$-comparable to
$\tau$.
\end{defi}
By the analysis in \S \ref{6Azke7oXF} and the relativization in \S \ref{f9OYHFR8ER}
we get the following fact.
\begin{equation}\label{XfvqHWyaU}
\parbox{12cm}{Given a KC-sequence $L=\tuple{\ell_i, i<k}$ relative to $\sigma$ 
and its greedy solution $\tuple{\sigma_i\ |\ i<k}$, let
$S=\{\sigma_i\ |\ i<k\}$. Then there exists an $S$-clear extension of $\sigma$ 
of length $\ell$ if and only if $\wgt{L}\leq 2^{-|\sigma|}-2^{-\ell}$.}
\end{equation}
In the solution $\tuple{S_i,i<k}$ that we construct, we view the sets
$S_i$ as {\em ordered sets of strings},  where order is given by the arrival time
of each string that is enumerated into $S_i$.
\begin{defi}[Arrival ordering in $S_i$]
Given the greedy solution $\tuple{S_i,i<k}$ to a layered KC-sequence
and some $j<k$, consider $\eta,\tau\in S_j$. We write $\eta<\tau$ if
$\eta$ was enumerated into $S_j$ at an earlier stage than $\tau$,
\ie if there exists $t<k$ such that $\eta\in S_j[t]$ and 
$\tau\not\in S_j[t]$. The terms `earliest' or `latest' string in $S_i$
refer to the minimal and maximal elements of $S_i$ with respect to this ordering.
\end{defi}

In the following, for each $t>0$ and each string $\sigma$, 
we let $L_{\sigma}[t]$ denote the state of the request set $L_{\sigma}$
at the end of stage $t$, \ie when the definition of the greedy solution of 
$\btuple{(u_{i}, \ell_{i}), i<t}$ has been completed. Then at the next step,
the definition of the greedy solution of $\btuple{(u_{i}, \ell_{i}), i<t+1}$ will be given
as an extension of the greedy solution of $\btuple{(u_{i}, \ell_{i}), i<t}$
(\ie by enumerating into the sets $S_i$ and extending the set sequence with $S_t$).
Additional requests will also be enumerated into the sets $L_{\sigma}$, thus
determining  $L_{\sigma}[t+1]$. 
Definition \ref{NrJFDTH4f} is an induction on all $k\in\Nat$ with $k>0$.
For notational simplicity, in the induction step
for the definition of  $\tuple{S_i[k+1],i<k+1}$, we write $S_i$ for $S_i[k+1]$,
$S'_i$ for $S_i[k]$ and $L_{\sigma}$ for 
$L_{\sigma}[k+1]$.



\begin{table}
\begin{center}
  \begin{tabular}{rlrl}
 \hline\hline\\[-0.2cm]
{\small $(u_i,\ell_i)$:}   &  {\small $i$th layered request}	&\hspace{0.5cm}
{\small $S$:}   & {\small codes in $\cup_{i<k} S_i$}\\[1ex]
{\small $S_i$:}	  &  {\small satisfaction set for $\btuple{u_i,\ell_i}$} &\hspace{0.5cm}
{\small $L_{\sigma}$:}    & {\small plain KC-sequence corresponding to $\sigma\in S$}\\[1ex]
\hline\hline
\end{tabular}
\caption{Parameters of the greedy solution to a layered KC-sequence}
\label{ta:vschelparbouatd}
\end{center}
\end{table}

\begin{defi}[Greedy solution for layered KC-sequences]\label{NrJFDTH4f}
Let $S_0=\{\lambda\}$ and $L_{\sigma}[0]=\emptyset$ for all $\sigma$. This specifies the greedy solution to the sequence $\langle (\ast,0) \rangle$. 
The greedy solution  $\tuple{S_i,i<k+1}$ of $\btuple{(u_{i}, \ell_{i}), i<k+1}$
is obtained by extending the sets in the greedy solution $\tuple{S'_i,i<k}$ of
$\btuple{(u_{i}, \ell_{i}), i<k}$ with at most one string each, and concatenating the
modified sequence $\tuple{S_i,i<k}$ with a singleton $S_{k}$ as follows.
Let $S'=\cup_{i<k} S'_i$ and assume that $L_{\sigma}[k]$ is defined for all $\sigma$.

Let  $\tuple{v_j,j<t}$ be the characteristic sequence of the latest term $(u_{k}, \ell_{k})$. 
For every $x<k$ such that $x\neq v_j$ for all $j<t$, we define $S_x=S'_x$.
\begin{equation}\label{A9LwkOv6ys}
\textrm{\underline{Hypothesis}}:\left\{\  \parbox{12cm}{\textup{there 
exists some $j<t-1$ and  $\sigma\in S'_{v_j}$
with $\wgtb{L_{\sigma}[k]}\leq 2^{-|\sigma|}-2^{-\ell_{v_{j+1}}}$}
\textup{[equivalently, $\sigma$ has an $S'$-clear extension of length $\ell_{v_{j+1}}$]}.}\right\}
\end{equation}
Let $j_0$ be the largest number $j$ satisfying \eqref{A9LwkOv6ys} and
let  $\sigma_{j_0}$ be the earliest such string $\sigma\in S'_{v_{{j_0}}}$.
For each $j\in [j_0,t-2]$, starting from $j=j_0$,

\begin{enumerate}[\hspace{0.5cm}(a)]
\item enumerate the plain KC-request $\ell_{v_{j+1}}$ into $L_{\sigma_{j}}$;
\item let $\sigma_{j+1}$ be the string given to request $\ell_{v_{j+1}}$ of $L_{\sigma_{j}}$
by the greedy KC-solution relative to $\sigma_{j}$;
\item if $j<t-2$, define $S_{v_{j+1}}=S'_{v_{j+1}}\cup\{\sigma_{j+1}\}$, and
if $j=t-2$ define $S_{k}=S_{v_{t-1}}=\{\sigma_{t-1}\}$. 
\end{enumerate}
Also define 
$S_{v_{x}}=S'_{v_{x}}$ for all $x\leq j_0$.
String $\sigma_{j_0}$ is called {\tt the base} of stage $k+1$ of the greedy solution.
\end{defi}

\begin{rem}[Basic properties]\label{4YFaADyxRF}
We note
the following properties that are direct consequences of the construction.
\begin{enumerate}[\hspace{0.2cm}(i)]
\item If $d_1$ is the depth of request $(u_{k},\ell_{k})$
and $d_0$ is the depth of the base $\sigma_{j_0}$ of stage $k+1$,
then $d_0<d_1$ and for each $d\in (d_0, d_1]$ exactly one code of depth $d$ 
is enumerated into $S$.
\item the only code $\sigma\in S[k]$ for which $L_{\sigma}[k]\neq L_{\sigma}[k+1]$
is the base of stage $k+1$.
\item if $L_{\sigma}[k]\neq L_{\sigma}[k+1]$ for some code $\sigma$ 
of depth $d$ then there is an immediate successor of $\sigma$ (hence
a code of depth $d+1$) in 
$S[k+1]-S[k]$.
\end{enumerate}
\end{rem}

\subsection{Verification of the layered greedy solution}\label{ZMVKSMqTVQ}
Note that subject to 
\eqref{A9LwkOv6ys}  holding for the duration of the definition of
the greedy solution in Definition \ref{NrJFDTH4f},
the algorithm given satisfies each
layered request, producing a solution according to Definition \ref{jPdiYuWBqz}.
In particular, the prefix-freeness condition is met by the properties of the
(relativized) greedy solutions to the plain KC-sequences $L_{\sigma}$,
according to the analysis in \S \ref{6Azke7oXF} and \S \ref{f9OYHFR8ER}.
Hence it suffices to show that 
\eqref{A9LwkOv6ys} holds at the beginning of each stage of the induction in
 Definition \ref{jPdiYuWBqz}. 
We first establish a  monotonicity property
regarding the traces of strings enumerated successively into $S_i$.

\begin{lem}[Monotonicity of traces]\label{g36rcysNtq}
Let $i\in\Nat$ and let $\eta_0,\eta_1\in S_i$
with $\eta_0<\eta_1$. 
Then every `1' in the trace of $L_{\eta_0}$ is
to the right of (\ie at a larger position than) each `1' in the trace of $L_{\eta_1}$.
\end{lem}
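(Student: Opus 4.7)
My plan is to prove the claim by induction on the stage of the greedy construction (Definition \ref{NrJFDTH4f}), maintaining the required separation as an invariant between stages; the base stage is vacuous. For the inductive step at stage $k+1$ -- processing $r_k$ with characteristic sequence $\tuple{v_j,j<t}$ and greedy data $j_0, \sigma_{j_0},\sigma_{j_0+1},\ldots,\sigma_{t-1}$ -- I first catalogue the changes: the only $L$-sequences modified are $L_{\sigma_{j_0}}, L_{\sigma_{j_0+1}}, \ldots, L_{\sigma_{t-2}}$ (each gaining one new request), and the only freshly enumerated codes are $\sigma_{j_0+1},\ldots,\sigma_{t-1}$, each entering a distinct $S_{v_{j+1}}$. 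Recall from \S\ref{6Azke7oXF} that appending a request of length $\ell$ to a plain KC-sequence replaces the rightmost 1 at some position $p\leq\ell$ by the block of 1s at positions $p+1,\ldots,\ell$, leaving every other 1 intact; in particular, every newly introduced 1 sits at a position $\leq\ell$.

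Only two situations can threaten the invariant, since the indices $v_j$ are distinct so any two freshly enumerated codes end up in different sets. First, $\eta_1=\sigma_{j_0}$ has its trace shifted rightward while some earlier $\eta_0\in S'_{v_{j_0}}$ is held fixed: the newly introduced 1s in the trace of $L_{\eta_1}$ all lie at positions $\leq\ell_{v_{j_0+1}}$, so I need every 1 in the trace of $L_{\eta_0}$ to lie at a position strictly greater than $\ell_{v_{j_0+1}}$. This is equivalent to $\wgt{L_{\eta_0}}>2^{-|\eta_0|}-2^{-\ell_{v_{j_0+1}}}$, i.e.\ to the failure of the hypothesis \eqref{A9LwkOv6ys} at $\eta_0$, which must hold because $\sigma_{j_0}$ was chosen as the \emph{earliest} element of $S'_{v_{j_0}}$ satisfying \eqref{A9LwkOv6ys}. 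The dual situation ($\eta_0=\sigma_{j_0}$) is immediate: shifting $L_{\eta_0}$'s 1s further to the right preserves the required separation.

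Second, a freshly enumerated $\sigma_{j+1}$ (for some $j+1\in[j_0+1,t-2]$) joins $S_{v_{j+1}}$ alongside a previously enumerated $\eta_0$. A short computation shows that after the iteration the trace of $L_{\sigma_{j+1}}$ has 1s exactly at positions $\ell_{v_{j+1}}+1,\ldots,\ell_{v_{j+2}}$: initially it carries a single 1 at position $|\sigma_{j+1}|=\ell_{v_{j+1}}$, and appending the request of length $\ell_{v_{j+2}}$ displaces that 1 as recalled above. So I need every 1 in the trace of $L_{\eta_0}$ to lie past position $\ell_{v_{j+2}}$, which now follows from the \emph{maximality} of $j_0$: since $j+1>j_0$, no element of $S'_{v_{j+1}}$ can satisfy the hypothesis \eqref{A9LwkOv6ys} with extension length $\ell_{v_{j+2}}$, and $\eta_0$ is one such element. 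The boundary sub-case $j+1=t-1$ is vacuous because $S'_k=\emptyset$ at the outset of stage $k+1$.

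The main obstacle is organizational rather than conceptual: one has to track exactly which traces and which $S_i$ are altered at the stage, and then recognize that the two greedy choices embedded in Definition \ref{NrJFDTH4f} -- picking $j_0$ as large as possible and picking $\sigma_{j_0}$ as the earliest eligible base -- are precisely the combinatorial ingredients required to keep the 1s in any two traces attached to the same $S_i$ from ever interleaving.
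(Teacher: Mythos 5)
Your proof is correct and takes essentially the same approach as the paper's: induction on stages, observing that among previously existing codes only the base $\sigma_{j_0}$ has its $L$-sequence modified, and then invoking the maximality of $j_0$ to handle a fresh code sharing its $S_i$ with an old code, and the earliest-string tie-break to handle the base sharing its $S_i$ with an old code. Your tracking of exactly which positions in the trace carry $1$s after an update (a full block $p+1,\ldots,\ell$ rather than a single position) is if anything slightly more careful than the paper's phrasing, but the underlying argument is the same.
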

\begin{proof}
The proof is by induction on stages. At stage $0$, 
the claim holds trivially. Suppose that it holds at the end of stage $k$ and  $\eta_0,\eta_1\in S_i[k+1]$. 
Since $\eta_0,\eta_1\in S_i[k+1]$, the codes $\eta_0,\eta_1$ have the same depth
and the same length. 
At stage $k+1$, if either of
$L_{\eta_0}$, $L_{\eta_1}$ changes, then, by Remark \ref{4YFaADyxRF}, exactly one of 
$L_{\eta_0}$, $L_{\eta_1}$ changes and 
 one of the following holds: 
\begin{enumerate}[\hspace{0.5cm}(a)]
\item $\eta_0\in S[k]$ and $\eta_1\not\in S[k]$;
\item $\eta_0\in S[k]$ and $\eta_1\in S[k]$;
\end{enumerate}
The characteristic sequence of the new request $(u_{k}, \ell_{k})$
has a unique term $j$ such that  $(u_{j}, \ell_{j})$ is an immediate successor 
to  $(u_{i}, \ell_{i})$ (otherwise neither of $L_{\eta_0}$, $L_{\eta_1}$ would change at stage $k+1$).

First assume that (a) holds.
We claim that $\wgt{L_{\eta_0}}[k]> 2^{-|\eta_0|}-2^{-\ell_j}$: if this were not the case there would be
a clear extension of $\eta_0$ of length $\ell_j$, so the greedy solution would choose to satisfy
$(u_{j}, \ell_{j})$ above $\eta_0$ and not above $\eta_1$ (when it chooses the maximum index satisfying \eqref{A9LwkOv6ys}).
By the plain KC-theorem analysis,
it follows that there are no 1s in the trace of $L_{\eta_0}$ up to position $\ell_j$. On the other hand, 
the first stage where $L_{\eta_1}\neq \emptyset$ is $k+1$, so the trace of 
$L_{\eta_1}$ at the end of stage $k+1$ has a single 1 which is at position $\ell_j$.
This establishes the induction hypothesis for this case. 

In case (b),  the base of stage $k+1$ is
either $\eta_0$ or $\eta_1$.
We then subdivide further into two cases.  If  \eqref{A9LwkOv6ys} holds for $\eta_0$ at stage $k+1$,  then since earlier strings are given preference in choosing the base, it follows that $\eta_0$ is the base at stage $k+1$, 
 $L_{\eta_0}[k]\neq L_{\eta_0}[k+1]$ and 
$L_{\eta_1}[k]=L_{\eta_1}[k+1]$. By the induction hypothesis we may let 
$\ell$ be such that
all 1s in the trace of $L_{\eta_0}[k]$ are to the right of 
position $\ell$ and all 1s in the trace of $L_{\eta_1}[k]$
are strictly to the left of position $\ell$.  Since  \eqref{A9LwkOv6ys} holds for $\eta_0$,  we 
have $\ell_j\geq \ell$. After the enumeration of a request of length $\ell_j$ in
$L_{\eta_0}$, the required monotonicity property will hold.
If  \eqref{A9LwkOv6ys} does not hold for $\eta_0$, then let $\ell$ be defined in the same way. 
In this case we
have $\ell_j< \ell$. The request $(u_j,\ell_j)$ (as specified above) will be satisfied above $\eta_1$, and since all the 1s in the trace of 
$L_{\eta_1}[k]$ are strictly to the left of position $\ell$, the induction step follows.
\end{proof}

With Lemma \ref{g36rcysNtq} in place, it is not difficult to complete our verification of the layered greedy solution. It suffices to prove the result for layered KC-sequences of finite depth (i.e.\ for which there exists $d$ such that all requests are of depth at most $d$), since if (\ref{A9LwkOv6ys}) fails then it does so at some finite stage. The proof for sequences of finite depth $d$ then proceeds by induction on $d$. 

The case for $d=1$ is just the plain Kraft-Chaitin theorem, so suppose the result holds for $d\geq 1$.  Given a layered KC-sequence $L=\tuple{(u_i,\ell_i), i<k_1}$ of depth $d+1$ and weight at most 1, for which (\ref{A9LwkOv6ys}) fails to hold at stage $k_1$, we produce a sequence $L'=\tuple{(u_i',\ell_i'), i<k_0}$ of depth $d$, which is also of weight at most 1 and for which  (\ref{A9LwkOv6ys}) fails at stage $k_0$, contradicting the induction hypothesis.  In order to enumerate the sequence $L'$, we run the greedy solution for $L$ and act as follows during each stage $i$. At stage $i$, let $r(i)$ be the least $j$ such that we have not yet enumerated the $j$th request $(u_j',\ell_j')$ into $L'$ (with $r(0)=0$). 

\begin{itemize} 
\item If $(u_i,\ell_i)$ is of depth at most $d$, then let $(u_{r(i)}',\ell_{r(i)}')=(r(u_i),\ell_i)$ and enumerate it into $L'$, i.e.\ we enumerate essentially the same request into $L'$, but have to adjust the index of the request pointed to, in order to take account of the different rates of enumeration into $L$ and $L'$.
This is the $r(i)$th request enumerated into $L'$ and is the element of $L'$ which \emph{corresponds to} the $i$th element    $(u_i,\ell_i)$ of $L$. 
\item  If $(u_i,\ell_i)$ is of depth $d+1$ and the base at stage $i+1$ is of depth $d$, then make no enumeration into $L'$ at this stage. 
\item If $(u_i,\ell_i)$ is of depth $d+1$ and the base at stage $i+1$ is of depth $<d$ then let $\tuple{v_j, j\leq d+1}$ be the characteristic sequence of $(u_i,\ell_i)$, so that $(u_{v_d},\ell_{v_d})$ is the immediate predecessor of $(u_i,\ell_i)$ in $L$, while  $(u_{v_{d-1}},\ell_{v_{d-1}})$ is the immediate predecessor of $(u_{v_d},\ell_{v_d})$. Enumerate the request $(r(v_{d-1}),\ell_{v_d})$ into $L'$. So this is a request of depth $d$, which points to the request in $L'$ corresponding to  $(u_{v_{d-1}},\ell_{v_{d-1}})$ in $L$. For future reference, we say that this request  $(r(v_{d-1}),\ell_{v_d})$ which we have just enumerated into  $L'$ is a \emph{secondary} request and is a \emph{brother} of the $r(v_d)$th request enumerated into $L'$, which is the request corresponding to $(u_{v_d},\ell_{v_d})$. 
\end{itemize} 

With $L'$ enumerated as above, it then follows directly by induction on the stage $i$, that 
\begin{equation*}
\parbox{13cm}{any $\sigma$ is a code of depth $d'\leq d$ by the end of stage $i$ in the greedy solution for 
$L$ iff it is a code of the same depth in the greedy solution for $L'$ by the end of stage $r(i)$.}
\end{equation*}
By the first clause of this equivalence we mean
that $\sigma \in \cup_{j<i} S_i[i]$ 
and has at most $d$ proper predecessors in this set.
So if (\ref{A9LwkOv6ys}) fails at stage $k_1$ in the layered greedy solution for $L$, then it also fails at stage $k_0=r(k_1)$ in the layered greedy solution for $L'$. It remains to show that the total weight of $L'$ is at most 1.  For this it suffices to show that 
\begin{equation*}
\parbox{11cm}{the total weight of all of the secondary requests in $L'$ is at most the total weight of all the $(d+1)$-depth requests in $L$. }
\end{equation*}
Note that secondary requests in $L'$ are  $d$-depth and each has a unique brother which is of $d$-depth and \emph{primary}, i.e.\ not secondary.
So suppose that the $i_0$th request $(u_{i_0}',\ell_{i_0}')$ enumerated into $L'$ is primary and of depth $d$, and let $i_1<\cdots<i_m$ be all of the indices of secondary requests which are brothers of that primary request. Let $k+1$ be the stage in the layered greedy solution for $L$ at which we enumerate the $i_m$th request into $L'$, so that $r(k+1)=i_m$ and the $k$th request $(u_{k},\ell_k)$ in $L$ is of depth $d+1$. Let $i$ be such that the $i$th request in $L$ corresponds to the $i_0$th request in $L'$. Then at the end of stage $k$ in the layered greedy solution for $L$, there exist $m$ strings in $S_i[k]$, and these can be indexed $\{ \sigma_{i_0}, \sigma_{i_1}, \dots, \sigma_{i_{m-1}} \}$ in such a way that each $\sigma_{i_j}$ is made a code at stage $i_j$ in the layered greedy solution for $L'$, in order to satisfy the $i_j$th request in $L'$. From Lemma \ref{g36rcysNtq} applied to $\{ \sigma_{i_0}, \sigma_{i_1}, \dots, \sigma_{i_{m-1}} \}$ and the fact that the base is of depth $<d$ at stage $k+1$ in the layered greedy solution for $L$, it follows that by the end of stage $k+1$ the total weight of all of the $(d+1)$-depth requests pointing to the $i$th request $(u_i,\ell_i)$ is at least $m2^{-\ell_i}$. So this is at least the total weight of all the secondary requests with brother the  $i_0$th request $(u_{i_0}',\ell_{i_0}')=(u_{i_0}',\ell_i)$ in $L'$, as required.

\section{Proving  our main results}\label{pm7mWNnuPr}
In the previous section we gave a rather general method for coding
sequences into binary streams, via the concept of layered KC-requests. 
Here we wish to use this general coding tool in order to compress
an arbitrary binary stream, to a degree that matches a given information content measure.

\subsection{From layered requests to code-trees}
We may view the greedy solution of a layered KC-sequence as a \ce tree of strings,
a code-tree\footnote{A partial map $\sigma\mapsto V_{\sigma}$ from strings to sets of strings is called computably enumerable (c.e.)
if the family of sets $\tuple{V_{\sigma}}$ is uniformly computably enumerable. We say  $\sigma\mapsto V_{\sigma}$
is {\em monotone} if for each $\sigma$, $V_{\sigma}$ is prefix-free and contains only proper extensions of $\sigma$.
We define the code-tree $S$ generated by
a monotone \ce map $\sigma\mapsto V_{\sigma}$ from strings to sets of strings,
inductively as follows. 
The empty string $\lambda$ is in $S$ and has depth 0.
If $\sigma\in S$ and has depth $\ell$, then all strings in $V_{\sigma}$ are in $S$ and have depth $\ell+1$. 
A set of strings $S$ is a code-tree  if it is the code-tree 
generated  by some monotone \ce  map $\sigma\mapsto V_{\sigma}$.
If $\sigma, \tau\in S$  then $\sigma$ is the immediate predecessor of $\tau$ (and $\tau$ is an immediate  successor of $\sigma$) if $\tau\in V_{\sigma}$.   We say $\sigma$ is an
 ancestor of $\tau$ if there exists a sequence $\sigma_i, i\leq k$ such that 
 $\sigma_0=\sigma$, $\sigma_k=\tau$ and $\sigma_{j+1}\in V_{\sigma_j}$ for each $j<k$.
 By an infinite path through $S$, we mean a stream with infinitely many initial segments in $S$. 
 If $S$ is a code-tree, then we may refer to an element of $S$ as a code.} 
 in which each node is effectively mapped to some string, in a monotone
way. Of course, such a mapping was only implicit in \S \ref{AGapDZxJHM}
(where strings in $S$ encode the characteristic sequences of layered requests)
but will be made explicit here. In the following, layered KC-sequences will be
of the form $\tuple{(\sigma_i,u_i,\ell_i),\ i<k}$, so that each request is augmented by
some string. The significance of this is that each request codes a certain string which is
determined at the enumeration of the request. As a consequence, if $S=\tuple{S_i,\ i<k}$ is the
greedy solution to $\tuple{(\sigma_i,u_i,\ell_i),\ i<k}$, then each string in $S_i$ is a code
for $\sigma_i$.

In the following it will be useful to be able to produce
code-trees that contain codes which are not prefixed by any string in some \pf
set of strings $Q$ of sufficiently small weight ($Q$ might be a member of a universal Martin-L\"{o}f test, for example).
This task can also be achieved through the greedy solution of a suitable
sequence of layered requests.
To this end it 
will be convenient work with a `slowed down' or `filtered' enumeration of 
$Q$. During the construction, we therefore enumerate a \pf set of strings $D$, which  contains
those strings in the code-tree which are prefixed by strings in $Q$.
Here we define the enumeration of $D$, given effective enumerations of $Q$ and any code-tree 
$S$.

An element $\sigma$ of a set of strings is a leaf of that set if there are no proper extensions of 
$\sigma$ belonging to the set.  
A tree-enumeration $\tuple{S_s}$ of a code-tree $S$ 
is one where for any $\sigma\in S_s$, all ancestors of $\sigma$ in $S$ are already in $S_s$.
Note that in this case, a leaf of $S_s$ is not necessarily a leaf of $S$ or even $S_{s+1}$.
Given a tree-enumeration $\tuple{S_s}$ and a \ce
set $Q$ of `forbidden strings', 
in the following definition we 
define an enumeration $(D_s)$ of a set $D$ of strings with the property that
$\dbra{D_s}\subseteq\dbra{Q_s}$ (\ie any stream with a prefix in $D_s$ has a prefix in $Q_s$)
and if $\sigma$ is enumerated into $D$ at stage $s$ then $\sigma$ is a leaf of $S_s$.

\begin{defi}[Filtered enumeration of $Q$]\label{PvKfvvp7j8}
Given a code-tree $S$ with computable tree-enumeration $\tuple{S_s}$
and a \ce \pf set of strings $Q$ with enumeration $\tuple{Q_s}$, we define 
$\tuple{D_s}$ inductively. 
\begin{itemize}
\item At stage 0 let $D_0=\emptyset$;
\item At stage $s+1$, if there exists a leaf of $S_{s}$ 
which does not belong to $D_s$
and has a prefix in $Q_s$, pick the most recently enumerated into $S$ such leaf and  enumerate it into $D$.
\end{itemize}
A stage $s+1$ is called {\em expansionary} if $D_s=D_{s+1}$. 
Otherwise $s+1$ is called an {\em adaptive} stage.
\end{defi}

Since any string enumerated in $D_s$ has a prefix in $Q_s$ we have 
$\dbra{D_s}\subseteq\dbra{Q_s}$, while the converse is not generally true.
So $\tuple{D_s}$ is a filtered version of $\tuple{Q_{s}}$, 
in the sense that only existing leaf-codes that are currently prefixed
by a string in $Q_s$ can be enumerated into $D_s$. 
Definition \ref{PvKfvvp7j8} defines the enumeration of $D$ at stage $s+1$ in terms of the enumerations
that have occurred in $Q,S$ in the previous stages up to $s$.
Hence Definition \ref{PvKfvvp7j8} can be used recursively
{\em during} the construction of a code-tree $S$, so that the enumeration in $S$ at stage $s+1$ may depend
on $D_s$, which itself is defined in terms of $S_t, t\leq s$.
In \S\ref{oONPU21TN}
we shall enumerate $S$ in such a way that no new string 
will be enumerated into $S_{s+1}$ extending any string in $D_s$. 
This means that the strings in $D$ will actually be leaves of $S$, and that $D$ will be a prefix-free set.

\subsection{Proof of Theorem \ref{yMWDPEuwPx}}\label{oONPU21TN}
Recall the statement of Theorem \ref{yMWDPEuwPx}:
\begin{equation*}
\parbox{14cm}{If $I$ is any partial computable information content measure, then
every binary stream $X$ such that $\forall n\ I(X\restr_n)\de$ 
can be coded into a \ml random binary stream $Y$ such that $X$ is computable
from $Y$ with oracle-use $n\mapsto \min_{i\geq n} I(X\restr_i)$.}
\end{equation*}
Let $I$ be as in the hypothesis of the theorem, \ie 
a partial computable function such that $\sum_{I(\sigma)\de} 2^{-I(\sigma)}$ is finite.
Note that if $X$ is computable from $Y$ with oracle-use $g$, then for each integer $c$
there exists some $Z$ such that $X$ is computable from $Z$ with oracle-use $n\mapsto g(n)-c$.
Hence without loss of generality we may assume that
$\sum_{\sigma} 2^{-I(\sigma)}<1$. Then 
Theorem \ref{yMWDPEuwPx} follows from the following technical lemma, for a suitable choice of
a \ce \pf set $Q$ of strings.
\begin{lem}\label{EwCVSBQo7C}
If $I$ is a computable information content measure and $Q$ is a \pf set of strings such that
$\sum_{\sigma} 2^{-I(\sigma)}+\sum_{\sigma\in Q} 2^{-|\sigma|}<1$, then
every binary stream $X$ can be coded into a binary stream $Y$ such that $X$ is computable
from $Y$ with oracle-use $n\mapsto \min_{i\geq n} I(X\restr_i)$ and $Y$ does not have a 
prefix in $Q$.
\end{lem}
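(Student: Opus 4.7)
The plan is to set up a layered KC-sequence whose requests encode the desired compression, apply Theorem \ref{vOPTnu8Le8}, and then extract a suitable infinite path from the resulting code-tree for every $X$. For each string $\sigma$, let $\pi(\sigma)$ be the longest proper prefix $\tau$ of $\sigma$ satisfying $I(\tau)<I(\sigma)$, or the empty request if no such $\tau$ exists. I enumerate a \emph{primary} layered request $(\sigma,\pi(\sigma),I(\sigma))$ for every $\sigma$, augmenting each request with the string $\sigma$ it is meant to code. In parallel, I enumerate $Q$ and maintain the filtered set $D$ of Definition \ref{PvKfvvp7j8}, enumerating strings into $S$ in such a way that no new code extends a string already in $D$. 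Each time a code $\mu\in S_\sigma$ enters $D$, I enumerate an additional \emph{secondary} request with the same parameters $(\sigma,\pi(\sigma),I(\sigma))$ as the primary one; the modified greedy layered solution then places a fresh code in $S_\sigma$ prefix-incompatible with $\mu$.

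For the weight bookkeeping, the total weight of enumerated requests is $\sum_\sigma 2^{-I(\sigma)}+\sum_{\mu\in D} 2^{-|\mu|}$. The construction forces $D$ to be prefix-free, and since $Q$ is prefix-free each $\mu\in D$ has a unique $\nu\in Q$ with $\nu\preceq\mu$; grouping dead codes by this unique $\nu$ yields $\sum_{\mu\in D}2^{-|\mu|}\le\sum_{\nu\in Q}2^{-|\nu|}$. Hence the total weight is strictly less than $1$ by hypothesis, and Theorem \ref{vOPTnu8Le8} provides a computable uniform solution with every $S_\sigma$ non-empty and consisting of proper extensions of codes in $S_{\pi(\sigma)}$.

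For a given stream $X$, let $i_1<i_2<\cdots$ be the indices with $I(X\restr_{i_k})<I(X\restr_j)$ for all $j>i_k$; these are precisely the indices realising $\min_{i\ge n} I(X\restr_i)$ for appropriate $n$. A short argument comparing candidate prefixes and using selectedness shows that $\pi(X\restr_{i_{k+1}})=X\restr_{i_k}$, so the sets $S_{X\restr_{i_k}}$ form a chain in the layered KC tree. Applying K\"onig's lemma to the subtree of live codes in $\bigcup_k S_{X\restr_{i_k}}$, whose vertices have finite degree and whose levels are non-empty by the replacement mechanism combined with the weight budget, yields an infinite path $Y_1\prec Y_2\prec\cdots$ of live codes; I set $Y=\lim_k Y_k$, which has no prefix in $Q$. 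To decode, a Turing machine enumerates the computable code-tree and, on input $n$ with oracle $Y$, reads bits of $Y$ until the initial segment read lies in some $S_\sigma$ with $|\sigma|\ge n$, and outputs $\sigma\restr_n$; by construction the first such code along $Y$ is $Y_k$ for the least $k$ with $i_k\ge n$, so the oracle-use is exactly $I(X\restr_{i_k})=\min_{i\ge n} I(X\restr_i)$.

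The main technical obstacle will be showing that the live subtree used in the K\"onig argument is genuinely infinite for every $X$. This requires a careful analysis of the interaction between the dynamic greedy, the replacement mechanism, and the monotonicity of traces from Lemma \ref{g36rcysNtq}, in order to guarantee that each level $k$ retains a live code which still has live descendants at every subsequent level, rather than having its whole lineage killed off by later $Q$-enumerations.
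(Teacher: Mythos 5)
Your construction follows the paper's proof essentially step for step: the pointer map $\pi$ is exactly the paper's $I$-predecessor relation (Definition \ref{7m54knuyG9}), the primary/secondary request dichotomy mirrors the paper's expansionary/adaptive stages in Definition \ref{dUlAq3Dz7}, the weight bound $\wgt{L}+\wgt{D}\le\wgt{L}+\wgt{Q}<1$ is Lemma \ref{uch2OtHyt}, and the identification of the local $I$-minima prefixes and the extraction of $Y$ from the nested satisfaction sets is the paper's concluding argument verbatim. The "main technical obstacle" you flag at the end is real but is resolved by the observation that any code acquiring a descendant in $S$ is no longer a leaf and hence can never enter $D$, and that each stabilised request set $S_{j_i^*}$ has all its codes outside $D$ and of a fixed length, so the live subtree is automatically infinite and finitely branching; the paper dispatches this point tersely with "Clearly $S_X$ is infinite," but your König's lemma framing is an accurate account of what is really being used.
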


In order to obtain Theorem \ref{yMWDPEuwPx}, for each $c$ 
consider the set $Q_c=\{\sigma\ |\ K(\sigma)<|\sigma|-c\}$ of the strings which can be compressed by at least $c$ bits.
By the counting theorem from \cite{MR0411829} it follows that there exists a constant $d_0$ such that for all $c$ we have
$\mu(\dbra{Q_c})< 2^{d_0-c}$. Hence given an information content measure $I$ with $\sum_{\sigma} 2^{-I(\sigma)}<1$
we may choose some $c$ such that $\sum_{\sigma} 2^{-I(\sigma)}+\mu(\dbra{Q_c})<1$. Since $Q_c$ is \ce there exists
a \ce \pf set $Q$ such that $\dbra{Q}=\dbra{Q_c}$, \ie the two sets have the same infinite extensions. Then
$\mu(\dbra{Q_c})=\sum_{\sigma\in Q} 2^{-|\sigma|}$ so the hypothesis of Lemma \ref{EwCVSBQo7C} holds for $Q$.
Moreover by the definition of $Q_c,Q$ it follows that any binary stream without a prefix in $Q$ is \ml random.
In this way, Theorem \ref{yMWDPEuwPx} is a consequence of  Lemma \ref{EwCVSBQo7C}  for this particular set $Q$.

The following partial ordering on strings will guide the pointers in the definition of our
layered KC-sequence $L$.
\begin{defi}[The $I$-ordering]\label{7m54knuyG9}
We define $\sigma$ to be {\em the $I$-predecessor of $\tau$} if 
$I(\tau\restr_i)\de$ for all $i\leq |\tau|$,
$\sigma$ is a proper prefix of $\tau$ and $\sigma$ is the largest prefix of $\tau$
such that $I(\sigma)<I(\tau)$.
\end{defi}
We first define a layered request set $L$ based on $I$, 
and later extend it to $L'$ which produces $Q$-avoiding codes (\ie codes that are not prefixed
by strings in $Q$).
Let $(\tau_s)$ be 
an effective list of strings such that 
if $I(\tau_j)\de$ then $I(\rho)\de$ for all $\rho\prec\tau_j$ and the strings 
$\tau_i, i<j$ include all proper prefixes of $\tau_j$. 

{\em Definition of $L=\tuple{(\tau_i,u_i,\ell_i)\ |\ i\in\Nat}$.}
For each $s$, if $\tau_s$ has an $I$-predecessor, enumerate a
request for $\tau_s$ of length $I(\tau_s)$ pointing to the request of the $I$-predecessor 
of $\tau_s$, \ie a request $r_s=(\tau_s, u_s,I(\tau_s))$ 
where $u_s$ is the index of the predecessor of
$\tau_s$.
If $\tau_s$ does not have an $I$-predecessor, then enumerate
a request  $r_s=(\tau_s, 0,I(\tau_s))$.

We will now interweave the requests in
$L$ with additional requests based on $Q$, and form a request sequence
$L'= \tuple{r_i'[s]}$, where $r_i'=(\tau'_i,u_i',\ell_i')$. 
The mechanism by which these extra requests are inserted works as follows. 
All requests in $L$ are initially \emph{unejected}. 
At each stage $s+1$ we consider the greedy solution $S_i, i\leq s$ that has been generated
for the requests $r_i', i\leq s$, and the corresponding filtered enumeration $D_i, i\leq s+1$
with respect to $S_i, i\leq s$ according to Definition \ref{PvKfvvp7j8}.
At each 
expansionary stage we consider the least unejected request from $L$, and we then eject this request and enumerate it into $L'$ (with indices modified so as to reflect the different numbering of requests in $L$ and $L'$.) At each adaptive stage we shall not eject any requests from $L$, but will rather enumerate a new request directly into $L'$ -- this request can be thought of as a copy of some previous request which now has to be satisfied again due an enumeration into $D_{s+1}$.
Each request in $L$ will have a {\em current $L'$-index} which may be redefined during the 
construction at most finitely many times. The $L'$-{\em index of a code} $\sigma\in S:=\cup_i S_i$ is the unique $i$ such that $\sigma \in S_i$, and the $L$ -index of $\sigma$ is then the unique $j$ such that $i$ is the current $L'$-index for $r_j\in L$.

\begin{defi}[Definition of $L'$ given $L, Q$]\label{dUlAq3Dz7}
We define $L'=\tuple{(\tau_i', u_i',\ell_i')\ |\ i\in\Nat}$ in stages as follows.
If stage $s+1$ is adaptive, then consider the unique code
$\sigma\in D_{s+1}-D_s$, let $i$ be its $L'$-index, let $j$ be its $L$-index,  and enumerate
$r_{s+1}':=(\tau_i', u_i', \ell_i')$ into $L'$, setting $\tau_{s+1}'=\tau_i'$, $u_{s+1}'=u_i'$, 
$\ell_{s+1}'=\ell_i'$; in this case the current $L'$-index of
the $L$-request $r_j$ is redefined to be $s+1$, and we say that $r_i'$ becomes {\em outdated}.
If stage $s+1$ is expansionary, eject the least unejected request $r_i$ in $L$
and enumerate 
$r_{s+1}':=(\tau_i, u_{j}',\ell_i)$ into $L'$, where $j$ is the current $L'$-index of the $L$-request
$r_{u_i}$; in this case the current $L'$-index of $r_i$ is defined to be $s+1$
and we also set $\tau_{s+1}'=\tau_i$, $u_{s+1}'=u_j'$, $\ell_{s+1}'=\ell_i$.
\end{defi}

Note that if  stage $s+1$ is adaptive, then there exists a leaf of $S_{s}$ 
which does not belong to $D_s$
and has a prefix in $Q_s$, and that, according to Definition \ref{PvKfvvp7j8}, 
we pick that most recently enumerated into $S$ and  enumerate it into $D$. 
According to Definition \ref{dUlAq3Dz7}, we then enumerate a new leaf of the 
same length into $S$ at this stage. It follows that one can only ever have finitely many adaptive stages in a row. 

In the following, by $\wgt{H}$ for a set of strings $H$, we mean $\sum_{\sigma\in H} 2^{-|\sigma|}$.
\begin{lem}\label{uch2OtHyt}
Given any layered KC-sequence $L$ and any \pf set of strings $Q$, the weight
of the layered KC-sequence $L'$ of Definition \ref{dUlAq3Dz7} is 
bounded by $\wgt{L}+\wgt{Q}$.
\end{lem}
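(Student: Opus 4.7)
The plan is to split the requests enumerated into $L'$ into two groups according to the type of stage at which they were enumerated, and bound each contribution separately. At expansionary stages, Definition \ref{dUlAq3Dz7} enumerates a new request $r_{s+1}'$ with length $\ell_{s+1}' = \ell_i$, where $r_i$ is the least unejected request in $L$. Since each request in $L$ is ejected at most once, the sum of $2^{-\ell_{s+1}'}$ taken over all expansionary stages $s+1$ is bounded above by $\sum_i 2^{-\ell_i} = \wgt{L}$. So the first step is simply to record this observation.

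Next I would handle the adaptive stages. By Definition \ref{dUlAq3Dz7}, at each adaptive stage $s+1$ we look at the unique string $\sigma \in D_{s+1} - D_s$, locate its $L'$-index $i$, and enumerate a request whose length equals $\ell_i' = |\sigma|$ (since $\sigma \in S_i$ forces $|\sigma| = \ell_i'$ by clause (a) of Definition \ref{jPdiYuWBqz}). Consequently the total weight contributed by all adaptive stages is exactly $\sum_{\sigma \in D} 2^{-|\sigma|}$, where $D = \bigcup_s D_s$. So the task reduces to showing that $\sum_{\sigma \in D} 2^{-|\sigma|} \leq \wgt{Q}$.

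For this final step I would use two facts. First, by the comment following Definition \ref{PvKfvvp7j8} and the promise in the paragraph that introduces \S \ref{oONPU21TN}, the set $D$ is prefix-free (its elements are leaves of the code-tree $S$, and the construction of $S$ never enumerates extensions of strings already in $D$). Second, by Definition \ref{PvKfvvp7j8}, every $\sigma \in D$ has some prefix in $Q$, and because $Q$ is itself prefix-free that prefix is unique. So I can partition $D$ as $D = \bigsqcup_{q \in Q} D_q$ with $D_q = \{ \sigma \in D : q \preceq \sigma \}$. Each $D_q$ is a prefix-free set of extensions of $q$, so Kraft's inequality gives $\sum_{\sigma \in D_q} 2^{-|\sigma|} \leq 2^{-|q|}$, and summing over $q \in Q$ yields $\sum_{\sigma \in D} 2^{-|\sigma|} \leq \wgt{Q}$. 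Adding the two bounds completes the proof.

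The only potentially tricky point is the bookkeeping of lengths at adaptive stages, i.e.\ verifying that the copied request has length exactly $|\sigma|$ for the newly enumerated $\sigma \in D$; but this follows directly from matching the $L'$-index $i$ of $\sigma$ with the requirement in Definition \ref{jPdiYuWBqz}(a). Everything else is a clean application of Kraft's inequality together with the prefix-freeness of $D$ and $Q$.
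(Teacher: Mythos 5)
Your proof is correct and follows essentially the same route as the paper: decompose $\wgt{L'}$ into the contribution of ejected $L$-requests (bounded by $\wgt{L}$) plus the contribution of adaptive-stage requests (equal to $\wgt{D}$), then bound $\wgt{D}\le\wgt{Q}$ from the prefix-freeness of $D$ and $\dbra{D}\subseteq\dbra{Q}$. Your partition of $D$ by unique $Q$-prefix and application of Kraft's inequality is just an explicit spelling-out of the measure-theoretic comparison the paper uses.
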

\begin{proof}
According to the remarks after Definition \ref{PvKfvvp7j8}
we have that $D:=\cup_s D_s$ is \pf and
$\dbra{D}\subseteq\dbra{Q}$.
Moreover 
\[
\wgt{L'}=\wgt{L}+\wgt{D}\leq \wgt{L}+\wgt{Q}
\]
since each request in $L'$ is either a request ejected from $L$ or
a request of the same weight as the current enumeration into $D$.
\end{proof}
By Lemma \ref{uch2OtHyt} and
the hypothesis of Lemma \ref{EwCVSBQo7C}
we have that $\wgt{L'}<1$.
Hence $L'$ is a valid layered KC-sequence and the code set $S$
that is generated during the construction of $L'$ is the greedy solution
of $L'$.


We may now show that given any $X$ such that $I(X\restr_n)\de$ for all $n$,
there exists $Y$ which computes $X$ with oracle-use 
$\min_{i\geq n} I(X\restr_i)$. 
Consider
the lengths $(n_i)$ of the prefixes of $X$ that are local $I$-minima
in the following sense: $X\restr_{n_0}$ is the longest prefix of $X$ such that 
$I(X\restr_{n_0})=\min_n I(X\restr_n)$; inductively, $X\restr_{n_{i+1}}$ is the longest
prefix of $X$ which is of length greater than $n_i$ and such that 
$I(X\restr_{n_{i+1}})=\min_{n>n_i} I(X\restr_n)$.

Then consider the set of indices $J$ of the $L'$-requests
which are never outdated and which correspond to the strings $X\restr_{n_i}$ (\ie which have $X\restr_{n_i}$ as
their first coordinate) and let 
$S_X=\cup_{i\in J} S_i$. Clearly $S_X$ is infinite, so let $Y$ be a stream with infinitely many
prefixes from $S_X$. 

Note that $I(X\restr_{n_i})<I(X\restr_{n_{i+1}})$ for each $i$
and each $Y\restr_{I(X\restr_{n_i})}$ is a code in  the greedy solution $S$, which is not
prefixed by any string in $Q$ and which can be
effectively decoded into the segment $X\restr_{n_i}$.
Therefore $X$ is computable from $Y$ with oracle-use $n\mapsto \min_{i\geq n} I(X\restr_{i})$ 
as follows. Let $m_i=I(X\restr_{n_i})$.
Given $n>0$ in order to compute $X\restr_n$, we first compute 
the least $i$ such that $n\leq n_i$, 
using only $Y\restr_{m_{i}}$
of the oracle $Y$. Then we can use the given oracle Turing machine
in order to compute $X\restr_{n_{i}}$, and therefore $X\restr_n$, from $Y$ with oracle-use $m_{i}$.
If we let $n_{-1}=-1$ then 
by definition we have $m_{i}=\min_{t> n_{i-1}} I(X\restr_t)\leq \min_{t\geq n} I(X\restr_t)$
which concludes the proof of Lemma \ref{EwCVSBQo7C}.

\subsection{Proof of Corollary \ref{FxNsrDIzhj}}
Recall the statement of Corollary \ref{FxNsrDIzhj}:
\begin{equation*}
\parbox{14cm}{If $g$ is a computable upper 
bound on the initial segment \pf complexity of a stream $X$,
then $X$ is computable from a \ml random stream $Y$ with oracle-use 
$n\mapsto \min_{i\geq n} g(i)$.}
\end{equation*}
Corollary \ref{FxNsrDIzhj} is a direct consequence of Theorem \ref{yMWDPEuwPx}
and the following lemma.
\begin{lem}
Given a stream $X$ and a computable upper bound $g$ on $n\mapsto K(X\restr_n)$,
there exists a partial computable information content measure $I$ such that 
$I(X\restr_n)=g(n)$ for all $n$.  
\end{lem}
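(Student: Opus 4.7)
The plan is to define $I$ directly using the computable approximation to $K$ from above. Given a string $\sigma$, the algorithm for $I$ first computes $g(|\sigma|)$, which is possible since $g$ is computable, and then simulates a fixed universal prefix-free machine $U$ looking for a $U$-program of length at most $g(|\sigma|)$ which outputs $\sigma$. Equivalently, it searches for a stage $s$ at which $K_s(\sigma)\leq g(|\sigma|)$, where $(K_s)$ is the standard right-c.e. approximation to $K$. If such a witness is found, the algorithm halts and outputs $g(|\sigma|)$; otherwise it diverges. This $I$ is by construction partial computable, and satisfies $I(\sigma)\downarrow$ precisely when $K(\sigma)\leq g(|\sigma|)$, in which case $I(\sigma)=g(|\sigma|)$.

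For prefixes of $X$, the hypothesis that $g$ is a computable upper bound on $n\mapsto K(X\restr_n)$ ensures that the search terminates on every $X\restr_n$, delivering $I(X\restr_n)=g(n)$ as required. For the summability condition of Definition \ref{X5aVfWRM87}, observe that whenever $I(\sigma)\downarrow$ we have $K(\sigma)\leq g(|\sigma|)=I(\sigma)$, so
\[
\sum_{I(\sigma)\downarrow} 2^{-I(\sigma)}\;\leq\;\sum_{I(\sigma)\downarrow} 2^{-K(\sigma)}\;\leq\;\sum_{\sigma} 2^{-K(\sigma)}\;\leq\;1
\]
by Kraft's inequality applied to $U$. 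Hence $I$ is a partial computable information content measure with the desired property, completing the proof. The construction is direct and there is no real technical obstacle; the only conceptual point worth highlighting is that using the condition $K(\sigma)\leq g(|\sigma|)$ as the trigger for convergence simultaneously guarantees that $I$ is defined on every prefix of $X$ (because $g$ dominates $K$ there) and lets us upper-bound $2^{-I(\sigma)}$ by $2^{-K(\sigma)}$ to invoke Kraft's inequality.
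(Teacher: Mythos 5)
Your construction is identical to the paper's: wait for a stage $s$ with $K_s(\sigma)\le g(|\sigma|)$ and then set $I(\sigma)=g(|\sigma|)$, which converges on all prefixes of $X$ by hypothesis. Your verification of summability via $2^{-I(\sigma)}=2^{-g(|\sigma|)}\le 2^{-K(\sigma)}$ and Kraft is the same idea as the paper's (which phrases it by associating each convergent $\sigma$ with a distinct $U$-program of length at most $g(|\sigma|)$), so this is essentially the paper's own argument, and it is correct.
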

\begin{proof}
Given $g$ and a computable monotone approximation $(K_s)$ to $\sigma\mapsto K(\sigma)$
we define $I$ as follows: for each $\sigma$ wait until a stage $s$ such that
$K_s(\sigma)\leq g(|\sigma|)$. If and when such a stage appears, define $I(\sigma)=g(|\sigma|)$.
Clearly $I$ is partial computable.
It remains to show that $I$ is an information content measure.
Note that each definition $I(\sigma)\de$ made in our construction, corresponds to
a unique description of $\tau_{\sigma}$ of 
$\sigma$ with respect to the universal \pf machine $U$ (namely the first
description of length at most $g(|\sigma|)$). Therefore
\[
\sum_{I(\sigma)\de} 2^{-I(\sigma)}= 
\sum_{I(\sigma)\de} 2^{-g(|\sigma|)}= 
\sum_{I(\sigma)\de} 2^{-|\tau_{\sigma}|}\leq 
\sum_{U(\rho)\de} 2^{-|\rho|}<1
\]
which concludes the proof.
\end{proof}

\subsection{Proof of Theorem \ref{Z6AicbrzJm}}
Recall the statement of Theorem \ref{Z6AicbrzJm}:
\begin{equation*}
\parbox{14cm}{every binary stream $X$ can be coded into a \ml 
random binary stream $Y$ such that $X$ is computable
from $Y$ with oracle-use $n\mapsto \min_{i\geq n} K(X\restr_i)+\log n$.}
\end{equation*}

We need the following technical lemma.
\begin{lem}\label{pXeAC7F56T}
There exists a constant $c$ such that, for each $\sigma$:
\begin{equation}\label{Vn38AR6Axh}
\sum_{\rho\succeq \sigma} 2^{-K(\rho)-\log |\rho|}\leq 2^{-K(\sigma)+c}. \ \ 
\end{equation}
\end{lem}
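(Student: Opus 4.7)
The plan is to view the sum on the left-hand side as the value at $\sigma$ of a discrete left-c.e.\ semi-measure, and then appeal to the coding theorem for $K$ (i.e.\ the fact that $2^{-K}$ dominates every such semi-measure up to a multiplicative constant).

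First, define, for every string $\sigma$ (with the convention that the term for $\rho=\lambda$ is ignored, or replaced by $2^{-K(\lambda)}$, to handle $\log 0$), the function
\[
f(\sigma)\;=\;\sum_{\rho\succeq \sigma} 2^{-K(\rho)-\log|\rho|}.
\]
Since $\rho\mapsto 2^{-K(\rho)}$ is left-c.e.\ and the map $\rho\mapsto 2^{-\log|\rho|}=1/|\rho|$ is computable, the function $f$ is left-c.e.\ (it is a countable sum of uniformly left-c.e.\ nonnegative terms). Thus $\sigma\mapsto f(\sigma)$ is left-c.e.

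Next, I would bound the total mass of $f$ by swapping the order of summation. For each string $\rho$, the number of prefixes $\sigma\preceq\rho$ is $|\rho|+1$, so
\[
\sum_{\sigma} f(\sigma)\;=\;\sum_{\rho} \frac{2^{-K(\rho)}}{|\rho|}\cdot(|\rho|+1)
\;\leq\; 2\sum_{\rho} 2^{-K(\rho)},
\]
which is a finite constant by the Kraft inequality applied to the optimal prefix-free machine. Hence, up to a multiplicative constant, $f$ is a discrete left-c.e.\ semi-measure.

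Finally, I would invoke the coding theorem (the maximality of the universal discrete left-c.e.\ semi-measure $2^{-K(\cdot)}$ up to a multiplicative constant): there exists a constant $c$ such that $f(\sigma)\leq 2^{-K(\sigma)+c}$ for every $\sigma$, which is precisely \eqref{Vn38AR6Axh}. The only real subtlety is checking left-c.e.-ness of the infinite sum defining $f$ (which is routine, since partial sums over all $\rho\succeq\sigma$ with $|\rho|\leq n$ give a computable monotone approximation from below) and the harmless edge case $\rho=\lambda$ for the $\log|\rho|$ term; neither poses a genuine obstacle.
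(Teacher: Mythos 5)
Your proposal is correct and follows essentially the same route as the paper's own proof: recognize the left-hand side as a left-c.e.\ function of $\sigma$, swap the order of summation to bound its total mass by a constant multiple of $\sum_\rho 2^{-K(\rho)}$, and then invoke the maximality of $2^{-K}$ among left-c.e.\ discrete semi-measures. The only cosmetic difference is that you count the prefixes of $\rho$ as $|\rho|+1$ (which is correct) and absorb the resulting factor of $2$ into the coding-theorem constant, whereas the paper writes $|\rho|$ and gets the sharper bound $\sum_\rho 2^{-K(\rho)}<1$; either bookkeeping suffices since the conclusion is only claimed up to a multiplicative constant.
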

\begin{proof}
By the maximality of $\sigma\mapsto 2^{-K(\sigma)}$ as a \lce semi-measure,
it suffices to show that the map which sends $\sigma$ to the left-hand-side expression of
\eqref{Vn38AR6Axh} (which is clearly left-c.e.) is a semi-measure. We have
\[
\sum_{\sigma}\sum_{\sigma\preceq\rho} 2^{-K(\rho)-\log(|\rho|)}=
\sum_{\rho} |\rho|\cdot 2^{-K(\rho)-\log(|\rho|)}=
\sum_{\rho} 2^{-K(\rho)}<1,
\]
which concludes the proof of the lemma.
\end{proof}

By Lemma \ref{pXeAC7F56T}
we may consider an enumeration $(U_s)$ of the underlying universal machine $U$
such that
\begin{equation}\label{Fs2AmRqtO}
\sum_{\rho\succeq \sigma} 2^{-K_s(\rho)-\log |\rho|-c}\leq 2^{-K_s(\sigma)} \ \ 
\textrm{for each $\sigma$ and each stage $s$,} 
\end{equation}
where $K_s(\sigma)$ is the \pf complexity of $\sigma$ with respect to $U_s$. 
We may also assume that at each stage $s+1$ there exists exactly one string $\sigma$
such that $K_{s+1}(\sigma)<K_{s}(\sigma)$.  

It suffices to prove the following technical lemma, where $c$ is the constant from (\ref{Fs2AmRqtO}). It is convenient to assume in what follows that $c\geq 1$. 
\begin{lem}\label{ADVptCOxwy}
If $Q$ is a \pf set of strings such that
$\sum_{\sigma} 2^{-K(\sigma)}+\sum_{\sigma\in Q} 2^{-|\sigma|}<1$, then
every binary stream $X$ can be coded into a binary stream $Y$ such that $X$ is computable
from $Y$ with oracle-use $n\mapsto \min_{i\geq n} (K(X\restr_i) +\log i)+c$ and $Y$ does not have a 
prefix in $Q$.
\end{lem}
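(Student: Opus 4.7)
My plan is to follow the proof of Lemma \ref{EwCVSBQo7C}, replacing the computable information content measure $I$ by the right-c.e.\ approximation $g_s(\sigma)=K_s(\sigma)+\log|\sigma|+c$. Since $K$ is not itself computable, I will exploit the flexibility of the layered Kraft-Chaitin framework, which permits the same target string to receive several requests as $g_s$ decreases, in the same spirit as the $Q$-avoidance mechanism of Definition \ref{dUlAq3Dz7}.

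I build the layered KC-sequence $L$ stagewise: at stage $s+1$, for the unique $\sigma_s$ with $K_{s+1}(\sigma_s)<K_s(\sigma_s)$, I enumerate a fresh request for $\sigma_s$ of length $g_{s+1}(\sigma_s)$, pointing to the most recent request whose target is the $g_{s+1}$-predecessor of $\sigma_s$, that is, the longest $\tau\prec\sigma_s$ with $g_{s+1}(\tau)<g_{s+1}(\sigma_s)$; if no such $\tau$ exists, it points to the root. The geometric estimate $\sum_{\text{decrements}} 2^{-g_s(\sigma)}\leq 2\cdot 2^{-g(\sigma)}$, summed over $\sigma$ and combined with Lemma \ref{pXeAC7F56T}, yields a direct weight of at most $2^{1-c}\sum_\sigma 2^{-K(\sigma)}$, which for $c\geq 1$ is bounded by $\sum_\sigma 2^{-K(\sigma)}$. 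Interleaving with the $Q$-avoidance of Definition \ref{dUlAq3Dz7} produces a sequence $L'$ of weight $<1$ by the hypothesis of the lemma, to which Theorem \ref{vOPTnu8Le8} applies. For each $X$, defining $(n_i)$ as the sequence of local $g$-minima along $X$ in exact analogy with the $I$-minima in Lemma \ref{EwCVSBQo7C}, a stream $Y$ threading the codes associated to the final requests for the $X\restr_{n_i}$'s computes $X$ with oracle-use at most $g(X\restr_{n_i})\leq\min_{j\geq n}(K(X\restr_j)+\log j)+c$, and $Y$ avoids $Q$, hence is \ml random.

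The principal obstacle, absent from the proof of Lemma \ref{EwCVSBQo7C}, is that the $g_s$-predecessor of $X\restr_{n_i}$ at the stage of its final request enumeration need not coincide with the limit $g$-predecessor $X\restr_{n_{i-1}}$: if $K_s(X\restr_{n_{i-1}})-K(X\restr_{n_{i-1}})$ still exceeds $g(X\restr_{n_i})-g(X\restr_{n_{i-1}})$ at that stage, then $X\restr_{n_{i-1}}$ fails the strict inequality and the pointer chain from $X\restr_{n_i}$'s final request back to $X\restr_{n_{i-1}}$'s may break. To repair the chain I augment the construction with a cascading re-enumeration: whenever $K_s(\rho)$ decreases, for each $\sigma\succ\rho$ with $K_{s+1}(\sigma)$ finite whose $g_{s+1}$-predecessor has just become $\rho$, I enumerate an extra request for $\sigma$ of length $g_{s+1}(\sigma)$ pointing to the new request for $\rho$. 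Controlling the cascade weight is the delicate step: using (\ref{Fs2AmRqtO}) in place of the limit bound of Lemma \ref{pXeAC7F56T}, the cascade cost triggered by a decrement of $K_s(\rho)$ is at most $2^{-K_{s+1}(\rho)}$, and summing with the geometric-series trick over the decrement stages of each $\rho$ controls the total. By selecting the constant $c$ of (\ref{Fs2AmRqtO}) large enough---permitted because Lemma \ref{pXeAC7F56T} is merely an existence statement and the underlying universal machine can always be replaced by one which grants extra description-length slack---the combined weight of direct and cascade requests still fits within $\sum_\sigma 2^{-K(\sigma)}$, so the hypothesis $\sum_\sigma 2^{-K(\sigma)}+\wgt{Q}<1$ keeps $\wgt{L'}$ below $1$, and the final requests for the $X\restr_{n_i}$'s form a genuine chain in $L$, yielding the nested codes required to define $Y$.
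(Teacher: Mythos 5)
Your overall plan follows the paper's proof closely: stagewise requests of length $K_s(\sigma)+\log|\sigma|+c$, the $Q$-avoidance mechanism of Definition \ref{dUlAq3Dz7}, a weight bound via (\ref{Fs2AmRqtO}), and threading $Y$ through codes for the significant initial segments of $X$. You also correctly identify the new obstacle relative to the computable-$I$ case of Lemma \ref{EwCVSBQo7C}: since $K$ is only right-c.e., the $g_s$-predecessor structure drifts as $K_s$ settles, so a request may end up pointing at a target that later proves suboptimal.

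Your cascading re-enumeration, however, is not strong enough to repair the pointer chain, for two reasons. First, it fires only for $\sigma$ whose $g_{s+1}$-predecessor ``has just become $\rho$''. This misses the case where $X\restr_{n_{i-1}}$ is \emph{already} the $g_s$-predecessor of $X\restr_{n_i}$ at the stage of $X\restr_{n_i}$'s final direct request, and $K_s(X\restr_{n_{i-1}})$ decreases again afterwards: nothing then re-issues $X\restr_{n_i}$'s request, so it keeps pointing at a superseded $X\restr_{n_{i-1}}$-request of length $>g(X\restr_{n_{i-1}})$, the threaded code for $X\restr_{n_{i-1}}$ is too long, and the claimed oracle-use fails for $n\le n_{i-1}$. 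Second, even when the cascade does fire for $\sigma$, it re-issues only that one node; requests that pointed to $\sigma$'s old request are left dangling, since \emph{their} $g_{s+1}$-predecessor has not changed. Your rule, being keyed to the $g_s$-predecessor relation rather than to the layered tree structure, has no way to propagate the repair to deeper levels.

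The paper resolves both issues at once by tracking the layered tree: a request is \emph{valid} at stage $s$ if its length and the lengths of all its ancestors in the layered tree equal the current $K_s(\cdot)+\log|\cdot|+c$, and whenever the target $\rho$'s complexity drops, the entire subtree of still-valid requests descending from $\rho$'s old request is cloned with its internal pointers preserved and re-rooted at $\rho$'s new request (Definitions \ref{4UXMUoeDp1} and \ref{GFbCFxP43}). This automatically re-issues nodes at every depth \emph{and} nodes that were already pointing at $\rho$. It also gives a cleaner stagewise accounting: the new direct request plus the cloned subtree cost at most $2^{-K_{s+1}(\rho)-\log|\rho|-c}+2^{-K_s(\rho)}\le 2^{-K_{s+1}(\rho)}$ for $c\ge 1$ (normalizing so $K_s(\rho)=K_{s+1}(\rho)+1$), exactly matching the increase of $\wgt{U}$, so $\wgt{L}\le\wgt{U}$ by a direct induction rather than a geometric-series estimate requiring a re-chosen constant. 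Replacing your one-level, predecessor-change-triggered cascade by this valid-subtree clone is what is needed to close the gap.
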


We follow the form of the argument given in \S \ref{oONPU21TN}. 
The main difference here is that we cannot directly define an analogue of the ordering of strings
in Definition \ref{7m54knuyG9}
based on the measure $\sigma\mapsto  K(\sigma) +\log |\sigma|+c$ 
(indicating which segments of each string are coded) since we only have an approximation
of our measure at each stage.
Instead, such an ordering will be implicitly approximated, in a coding process where
false approximations to   $\sigma\mapsto K(\sigma)$ correspond to suboptimal codes or requests.
 At each stage of the construction we identify the unique request that needs updating, and
the request that the replacement has to point to.

\begin{defi}[Target and pre-target]
The target of stage $s+1$ is the unique
string $\sigma$ such that $K_{s+1}(\sigma)<K_s(\sigma)$.
The longest initial segment $\tau$ of the target $\sigma$ at stage $s+1$, such that
$K_{s+1}(\tau)+ \log |\tau| <K_{s+1}(\sigma) + \log |\sigma| $, if this exists, is called the pre-target at stage $s+1$.
\end{defi}

We also need to identify requests that have previously pointed to the most recent suboptimal request (for the target)
and update them.
We express the subsequence of descendants of a request\footnote{The request $r'$ is a descendant of the request $r$ (relative to a given layered KC-request sequence $L$) if $r'=r$ or there is a sequence $r=r_0, r_1, \dots, r_m=r'$ of members of $L$ such that each $r_{i+1}$ is an immediate successor of $r_i$ for $0\leq i<m$. In this case we also say that $r$ is an ancestor of $r'$.}  in a layered KC-sequence 
as another layered KC-sequence by a suitable
manipulation of the indices. First though, we need to define a notion of \emph{validity} for requests. 

\begin{defi}[Validity of requests]
A layered KC-request $r=(\sigma,u,\ell)$ is valid (relative to the layered KC-sequence of which it is a member) at stage $s$ if $\ell=K_{s}(\sigma)+\log |\sigma|+c$
and for every ancestor $(\sigma',u',\ell')$ of $r$ we have $\ell'=K_{s}(\sigma')+\log |\sigma'|+c$.
If $\rho$ is a string, a layered KC-request $(\sigma,u,\ell)$
is a $\rho$-request if $\sigma=\rho$, and may also be referred to as a request for $\rho$.
\end{defi}

\begin{defi}[Subtrees of a layered KC-sequence]\label{4UXMUoeDp1}
Given a stage $s$, a finite layered KC-sequence $L_k=\tuple{r_i=(\sigma_i,u_i,\ell_i)\ |\ i<k}$, and $t<k$, let 
$\tuple{r_{n_i}\ |\ i<n^{\ast}}$
be the list of descendants of $r_t$ in $L_k$ which are valid at stage $s$, labelled such that $n_i<n_{i+1}$ for all $i$.
The  $r_t[s]$-subtree in $L_k$ is the  layered KC-sequence
$\tuple{r'_{i}=(\sigma_{n_i},u^{\ast}_{i},\ell_{n_i})\ |\ i<n^{\ast}}$, where
$u^{\ast}_0=\ast$ and
for each $i<n^{\ast}$, 
$u^{\ast}_i$ is the unique $j$ such that $u_{n_i}=n_j$.
\end{defi}

%
We may now formalise the manner in which a given layered KC-sequence $L_k$ 
may be extended by appending a  subtree above the last element. The cloning operation
is given in the form that will be used in the construction. Note that the last request in the given
layered KC-sequence is not part of the given subtree. This is because in the construction, this last element
will be the replacement for the target request at that stage.

\begin{defi}[Clone extension of layered KC-sequences]\label{GFbCFxP43}
Given a stage $s$,  a finite layered KC-sequence $L_{k+1}=\tuple{r_i\ |\ i<k+1}$ and $t<k$,
let $L_{k}=\tuple{r_i\ |\ i<k}$
and let $\tuple{r'_i=(\sigma'_i, u'_i, \ell'_i)\ |\ i<n^{\ast}}$
be the $r_t[s]$-subtree  in $L_k$.
We define the $r_t[s]$-clone extension of $L_{k+1}$ to be the layered KC-sequence
$L= \tuple{r^{\star}_i\ |\ i < k+n^{\ast}}$
where:
\begin{itemize}
\item for each $i< k+1$ we have $r^{\star}_i=r_i$;
\item for each $0< i<n^{\ast}$, we have $r^{\star}_{k+i}=(\sigma'_{i}, u'_{i}+k, \ell'_{i})$.
\end{itemize}
\end{defi}
Note that the clone extension of a layered KC-sequence, as defined in Definition 
\ref{GFbCFxP43}, produces another layered KC-sequence.

The operation described in Definition \ref{GFbCFxP43} can now be used
in order to define the universal layered KC-sequence $L$ inductively, by
successive concatenations of a finite
layered KC-sequence  $L_s$ that is defined at each stage. Here we use $\ast$
for indicating concatenation on layered KC-sequences. 
 In the following definition, we deviate from the previous implicit convention that $L_k$ is a sequence of length $k$. 

\begin{defi}[Universal layered KC-sequence]\label{heSec6EtwP}
At stage $s+1$ suppose that $L_s$ has been defined, let $\sigma$ be the target
and define
$L'=L_s\ast (\sigma,u,K_{s+1}(\sigma)+\log |\sigma|+c)$ where $u$ is 
the index of the valid request for the pre-target, provided that the latter exists,
and $0$ otherwise. If there is no $\sigma$-request in $L_s$, define
$L_{s+1}=L'$; otherwise
let $r_t$ be the $\sigma$-request which was valid in $L_s$ (at stage $s$) and 
define $L_{s+1}$ to be the $r_t[s]$-clone extension of $L'$.
Finally define $L=\lim_s L_s$.
\end{defi}
In the following, given a \pf machine $U$,
we let $\wgt{U}$ denote the weight of the domain of $U$.
 \begin{lem}
The weight of the universal layered KC-sequence of Definition \ref{heSec6EtwP}
is at most $\wgt{U}$.
\end{lem}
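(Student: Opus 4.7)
The plan is to bound the weight of $L$ stage by stage, showing that at each stage $s+1$ the increment $\delta_s := \wgt{L_{s+1}} - \wgt{L_s}$ is at most the weight that the enumeration of $U$ contributes at that same stage. By the standing assumption on the approximation $(K_s)$, exactly one string $\sigma$ (the target) has $K_{s+1}(\sigma) < K_s(\sigma)$ at stage $s+1$, which corresponds to the enumeration of a single new $U$-description of $\sigma$ of length $K_{s+1}(\sigma)$, contributing $2^{-K_{s+1}(\sigma)}$ to $\wgt{U}$. The central claim to establish is therefore
\begin{equation*}
\delta_s \;\leq\; 2^{-K_{s+1}(\sigma)},
\end{equation*}
after which $\wgt{L} = \sum_s \delta_s \leq \sum_s 2^{-K_{s+1}(\sigma_s)} = \wgt{U}$ follows by summing over all stages.

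To prove the claim I would split into the two construction cases of Definition \ref{heSec6EtwP}. If $L_s$ contains no $\sigma$-request, then only the new request $(\sigma, u, K_{s+1}(\sigma) + \log|\sigma| + c)$ is appended, so $\delta_s = 2^{-K_{s+1}(\sigma) - \log|\sigma| - c} \leq 2^{-K_{s+1}(\sigma)}$ since $\log|\sigma| + c \geq 0$. The substantive case is when there is a (valid) $\sigma$-request $r_t$ in $L_s$ and an $r_t[s]$-clone extension is taken. Here $\delta_s$ is the weight of the new $\sigma$-request plus the total weight of the cloned descendants of $r_t$. A short induction along descendant chains, using the fact that each new request created by Definition \ref{heSec6EtwP} points to a request for the pre-target, which is a proper prefix of its target string, shows that every descendant of $r_t$ is a $\tau$-request with $\tau \succeq \sigma$, and every proper descendant has $\tau \succ \sigma$. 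Hence the total weight of the cloned portion is bounded by $\sum_{\rho \succ \sigma} 2^{-K_s(\rho) - \log|\rho| - c}$.

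Since only the $K$-value of $\sigma$ changed between stages $s$ and $s+1$, we have $K_s(\rho) = K_{s+1}(\rho)$ for every $\rho \neq \sigma$, and in particular for all $\rho \succ \sigma$. Applying inequality \eqref{Fs2AmRqtO} at stage $s+1$ with the string $\sigma$ and separating off the $\rho = \sigma$ summand yields
\begin{equation*}
\sum_{\rho \succ \sigma} 2^{-K_{s+1}(\rho) - \log|\rho| - c} \;\leq\; 2^{-K_{s+1}(\sigma)} - 2^{-K_{s+1}(\sigma) - \log|\sigma| - c}.
\end{equation*}
Adding back the weight of the newly appended $\sigma$-request gives exactly $\delta_s \leq 2^{-K_{s+1}(\sigma)}$, as desired. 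The main obstacle, which the descendant-structure induction resolves, is to confirm that the cloned portion is genuinely indexed by strings extending $\sigma$, so that the telescoping furnished by \eqref{Fs2AmRqtO} is applicable; once that geometry is in place, summing the per-stage bounds gives $\wgt{L} \leq \wgt{U}$.
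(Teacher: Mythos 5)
Your proof is correct and follows the same overall strategy as the paper: induct on stages, and show that the per-stage increase $\delta_s$ in the weight of $L$ is bounded by $2^{-K_{s+1}(\sigma)}$, which is exactly the increase in $\wgt{U_s}$. The only difference is in how the key inequality is closed in the clone-extension case. The paper applies \eqref{Fs2AmRqtO} at stage $s$ to bound the cloned subtree by $2^{-K_s(\sigma)} \leq 2^{-K_{s+1}(\sigma)-1}$, and then separately bounds the weight $2^{-K_{s+1}(\sigma)-\log|\sigma|-c}$ of the fresh $\sigma$-request by $2^{-K_{s+1}(\sigma)-1}$ using the standing assumption $c\geq 1$, summing to $2^{-K_{s+1}(\sigma)}$. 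You instead apply \eqref{Fs2AmRqtO} at stage $s+1$, peel off the $\rho=\sigma$ term, and observe that the fresh request's weight slots exactly into the gap that was removed, giving $\delta_s \leq 2^{-K_{s+1}(\sigma)}$ directly. Your version is slightly tighter and, in particular, does not rely on $c\geq 1$ at this point (you only use $\log|\sigma|+c\geq 0$ in the trivial no-predecessor case). The one thing you gesture at but do not fully spell out is that the cloned descendants of $r_t$ are pairwise for \emph{distinct} strings $\tau\succ\sigma$ so that their total weight is genuinely dominated by the tail sum $\sum_{\rho\succ\sigma}$; this follows because at any stage there is at most one valid $\tau$-request for each $\tau$ (a new $\tau$-request is created only when $K(\tau)$ drops, which simultaneously invalidates the old one), a fact worth stating explicitly.
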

\begin{proof}
It suffices to show that at each stage $s+1$, the weight of the layered KC-sequence
$L_s$ of Definition \ref{heSec6EtwP} is bounded above by the weight of
the domain of $U_s$. At each stage $s+1$, the target $\sigma$ receives a new shorter
description in $U$, and  
there are two kinds of requests that
are added to $L$:
\begin{itemize}
\item the request $(\sigma,u,K_{s+1}(\sigma)+\log |\sigma|+c)$ corresponding to the target;
\item the $r_t[s]$-subtree of $L_s$, in case $\sigma$ already has a $\sigma$-request.
\end{itemize}
The first request has weight $2^{-K_{s+1}(\sigma)-\log |\sigma|-c}$
and by \eqref{Fs2AmRqtO} the weight of the  $r_t[s]$-subtree of $L_s$
is bounded above by $2^{-K_s(\sigma)}=2^{-K_{s+1}(\sigma)-1}$. Overall, recalling the assumption that $c\geq 1$, the
increase in the weight of $L$ at stage $s+1$ is bounded above by
\[
2^{-K_{s+1}(\sigma)-\log |\sigma|-c}+2^{-K_{s+1}(\sigma)-1}\leq 2^{-K_{s+1}(\sigma)}.
\]
On the other hand,
the increase in $\wgt{U}$ at stage $s+1$ is $2^{-K_{s+1}(\sigma)}$.
Hence inductively, $\wgt{L_s}\leq \wgt{U_s}$ for each $s$, which means that 
$\wgt{L}\leq \wgt{U}$.
\end{proof}

Since the weight of the layered KC-sequence of Definition \ref{heSec6EtwP} is bounded by 1, 
we may consider its greedy solution. 
Since  we also want codes that avoid the given set $Q$ of Lemma \ref{ADVptCOxwy}, however, 
we first need to obtain a modified layered KC-sequence $L'$, exactly as we did in 
\S \ref{oONPU21TN}. Given $L$ and $Q$ we generate $L'$ as defined in
Definition \ref{dUlAq3Dz7}.
By Lemma \ref{uch2OtHyt}
we have that $\wgt{L'}<1$, so we may consider the greedy solution $S'$ of $L'$.
Note that some requests $L'$ will become {\em outdated} through enumerations into $Q$, while
some will become {\em invalid} (i.e.\ not valid) through the compression of strings. By the construction of $L$ and $L'$ 
it follows that no infinite chain of requests, such that each request points to the previous request in the chain, contains any outdated or invalid requests.

Given any $X$ we will now construct a suitable $Y$ from the code-tree $S'$.
We can think of the
layered KC-sequence $L$ or $L'$ as coding certain segments of $X\restr_{n_i}$
of $X$.

\begin{defi}\label{IpaMBFHcH4}
The \emph{significant} initial segments of a real $Z$ are $Z\restr_{n_i}$, $i\in\Nat$ where 
$(n_i)$ is defined inductively by $n_0=\arg\min_i (K(Z\restr_i) + \log i)$ and 
$n_{t+1}=\arg\min_{t> n_i} (K(Z\restr_i) + \log i)$.
\end{defi}

If $X\restr_{n_i}$ are the significant initial segments of $X$, then by the definition of $L$ it follows that
for each $X\restr_{n_i}$ there exists a $X\restr_{n_i}$-request of length $K(X\restr_{n_i})+\log n_i+c$.
Let $S_X'$ be the set of all codes in $S'$ which satisfy any of the above requests of  $L'$.
Since $S'_X$ is infinite, there exists an infinite path $Y$ through it, and a corresponding infinite chain of requests such that each points to the previous request in the sequence, meaning that none become outdated or invalid.  So $Y$ does not have
a prefix in $Q$, and is the union of the codes $Y\restr_{m_i}$, where 
$m_i=K(X\restr_{n_i})+\log n_i +c$ for
each $i$.
Furthermore, $Y\restr_{m_i}$, uniformly computes $X\restr_{n_i}$ for each $i$.
From this fact, it follows  that
$X$ is computable from $Y$ with 
 oracle-use $n\mapsto \min_{i\geq n} (K(X\restr_i)+\log i) +c$.

\section{Concluding remarks and open problems}\label{7vgsUJbYd}
We have shown formally that
every stream $X$ can be coded into an algorithmically random code-stream $Y$, from
which it is effectively recoverable with oracle-use the information content of $X$, as measured by the \pf
initial segment complexity of $X$, up to $\log n$. Moreover we noted that this oracle-use is optimal, up to $3\log n$.
The main breakthrough in this work is the elimination of an overhead of $\sqrt{n}\cdot \log n$ in the oracle use,
which exists in all previous approaches to this problem, independently of the complexity of the source $X$.
As we discussed in \S\ref{xbu9Sgt5wv}, 
this overhead is inherent to all previous coding methods, and is overwhelming both
from the point of view of compressible sources with initial segment complexity  $X$ is $\smos{\sqrt{n}\cdot \log n}$
as well as compared to our logarithmic overhead. 
In the case of computable information content measure, or computable upper bounds in the initial segment complexity,
Corollary \ref{FxNsrDIzhj} gives overhead 0 (\ie oracle use exactly the given upper bound) while any previous method
retains the overhead $\sqrt{n}\cdot \log n$.

The second half of our contribution is conceptual and methodological, in terms of a new general coding method,
which was necessary for the elimination of the bottleneck, intrinsic in
all previous approaches, $\sqrt{n}\cdot \log n$. 
This was presented in a fully general form in \ref{iKCR6nBLqB},
and can be seen as an infinitary analogue of the classic 
Kraft-McMillan and Huffman tools 
for the construction of prefix codes with minimum redundancy. As such, we expect that our method will have
further applications on problems where sequences of messages are needed to be coded into an online stream,
without suffering an accumulation of the overheads that the words of a \pf code inherently have, in an open-ended
code-stream.
A detailed comparison of our method with all the existing methods was conducted in \S\ref{xbu9Sgt5wv},
and a single characteristic property was isolated, which is present in all of the previous approaches but absent in our method.
We concluded that it is the liberation from this restrictive property \eqref{ba23HDmLNi} that allows our method to achieve
optimal coding and break through the bottleneck that is characteristic in all previous approaches.

There are two main ways that our work can be improved and extended.

{\bf Tightness of the upper bounds on the oracle-use.}
In view of the oracle-use bound $K(X\restr_n)+\log n$ in 
Theorem \ref{Z6AicbrzJm}, a natural question is if the stronger upper bound
$K(X\restr_n)$ is possible (or $\min_{i\geq n} K(X\restr_i)$).
After all, the worst-case oblivious bounds of
\cite{optcod16,bakugaopt}  that we discussed in \S\ref{xbu9Sgt5wv} are tight in a rather absolute way,
even up to $\log\log\log n$ differences.
Moreover 
it was shown in  \cite{koba_rod} that 
the oracle-use bound $\min_{i\geq n} K(X\restr_i)$ is achievable in the special case
of \lce reals. Despite these examples,
we have reasons to conjecture that
such an ultra-tight upper bound is not achievable in general.
\begin{conjecture}
There exists $X$ such that in any oracle computation of $X$ by any $Y$, the oracle-use is not bounded
above by $n\mapsto K(X\restr_n)$.
\end{conjecture}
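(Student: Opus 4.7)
The plan is to construct $X$ by a priority diagonalization that simultaneously controls its initial segment complexity and defeats every potential coding. Enumerate all oracle Turing functionals $\{\Phi_e\}_{e\in\Nat}$; for each $e$ impose a requirement $R_e$: there is some $n = n_e$ such that no finite string $\sigma$ with $|\sigma|\leq K(X\restr_n)$ satisfies $\Phi_e^{\sigma}(n)\downarrow = X\restr_n$ with oracle use exactly $|\sigma|$. Satisfying every $R_e$ yields the conjecture, since any $Y$ computing $X$ via $\Phi_e$ with $g(n)\leq K(X\restr_n)$ would produce a witness $\sigma = Y\restr_{g(n)}$ violating $R_e$.

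The strategy for a single $R_e$ is a counting argument at a suitably large length $n$. Define the forbidden set
\[
B_e(n,k) = \{\Phi_e^{\sigma}(n) : \sigma \in 2^{\leq k},\; \Phi_e^\sigma(n)\downarrow \text{ with oracle use } |\sigma|\},
\]
which has cardinality at most $2^{k+1}-1$. I would then seek $\tau\in 2^n$ with $K(\tau)\leq k$ and $\tau\notin B_e(n,k)$, and commit $X\restr_n = \tau$. Because $K(X\restr_n)$ depends only on $X\restr_n$, this commitment fixes $K(X\restr_n)\leq k$ permanently, so the diagonalization against $\Phi_e$ persists under every extension of $X$.

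The requirements would be coordinated by a simple priority ordering. Higher-priority requirements commit initial segments first, at lengths $n_e$; lower-priority requirements then choose fresh lengths $n_{e'}>n_e$ and work among the extensions of the prefix committed so far. Since extending a low-$K$ prefix by an effectively describable suffix (for instance, padding with zeros) produces another low-$K$ string, low-complexity extensions are abundant, and the construction never runs out of room for diagonalization. No true injury is needed because the requirements act on disjoint blocks of coordinates.

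The main obstacle is the counting margin. We have $|B_e(n,k)| \leq 2^{k+1}$, while the number of $\tau\in 2^n$ with $K(\tau)\leq k$ is only $\leq 2^{k+O(1)}$; the two are of the same order and neither dominates merely by inflating $k$. The hard step is to guarantee that low-complexity extensions of the committed prefix strictly outnumber the bad outputs of $\Phi_e$. One plausible route is to count only the \emph{irredundant} shortest descriptions contributing to $B_e(n,k)$: these form a prefix-free set of total weight at most $1$, which (together with the Kraft inequality) should give a much sharper bound than $2^{k+1}$. A different avenue would be to replace explicit diagonalization with a measure-theoretic argument, showing via a Fubini-style estimate that the set of tightly codable streams is null—so that a generic $X$ (in the Lebesgue or category sense) realises the conjecture.
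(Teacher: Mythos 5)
The statement you are addressing is stated as an open \emph{conjecture} in the paper: the authors do not prove it, and offer only a heuristic about the non-monotone settling times of $n\mapsto K(X\restr_n)$ being an obstacle to coherent codes. So there is no paper proof to compare against, and a correct argument here would be genuinely new.

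Your proposal has a flaw that is more basic than the counting margin you flag at the end: the requirement $R_e$, as you have stated it, is \emph{unsatisfiable} for certain indices $e$, no matter which $X$ is chosen. Consider the functional $\Phi_e$ that, on oracle $Y$ and input $n$, simulates the universal prefix-free machine $U$ on successive prefixes of its oracle and halts with output $U(\sigma)$ as soon as it finds $\sigma\prec Y$ with $U(\sigma)\de$ and $|U(\sigma)|=n$. Fix any $X$ and any $n$, and let $\sigma^{\ast}$ be a shortest $U$-description of $X\restr_n$, so $|\sigma^{\ast}|=K(X\restr_n)$. Since $\mathrm{dom}(U)$ is prefix-free, no proper prefix of $\sigma^{\ast}$ is in $\mathrm{dom}(U)$; hence $\Phi_e^{\sigma^{\ast}}(n)\de=X\restr_n$ with oracle-use \emph{exactly} $|\sigma^{\ast}|=K(X\restr_n)$. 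Thus for this $e$, \emph{every} length $n$ admits a witness $\sigma$ defeating $R_e$: there is no $n_e$ at which $R_e$ can succeed, so the priority construction has no room to act for this requirement. The requirement is over-ambitious: it asks for a single length at which no short oracle prefix reproduces $X\restr_{n_e}$, whereas the conjecture only asks that no single stream $Y$ supplies good oracle prefixes \emph{coherently across all $n$ at once} (quantifier order $\forall Y\,\exists n$, not $\exists n\,\forall\sigma$).

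This quantifier gap is exactly the difficulty the authors point to. Shortest descriptions of different prefixes of $X$ need not be nested; the open question is whether they can be arranged to all live inside one stream $Y$ with use bounded by $K(X\restr_n)$ at every $n$, and the approximation-theoretic obstruction is that a late drop in $K(X\restr_n)$ can invalidate codes for longer prefixes without changing their own $K$-values. A diagonalization that inspects one length at a time is blind to this, and the counting failure you observed for the single-$n$ requirement is a symptom: for the universal $\Phi_e$ above, the ``bad'' set $B_e(n,k)$ is literally $\{\tau\in 2^n: K(\tau)\leq k\}$, so it equals the ``good'' candidate pool and no padding trick helps. Neither of your two suggested repairs addresses this: the prefix-free/Kraft observation only improves the bound on $|B_e(n,k)|$ by a constant factor, and the Fubini/category sketch would still have to handle the uncountable quantifier over $Y$ and the fact that the use bound $K(X\restr_n)$ is large precisely on the measure-one set of ML-random $X$, where the restriction is weakest. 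A viable attack would need to diagonalize against \emph{finite initial segments of candidate $Y$'s}, tracking their behaviour across a range of lengths, which is a substantially different and more delicate construction than the one you outline.
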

The intuition here is based on the non-uniformity of initial segment complexity, which was
discussed in the beginning of \S \ref{wrsYJDwjf}.
Roughly speaking, and in the context of the arguments in \S\ref{pm7mWNnuPr}, 
in any coding of $X$ into $Y$ with oracle-use $n\mapsto K(X\restr_n)$,
a change in the approximation to $K(X\restr_n)$ 
would render all codes of segments of $X$ that
are longer than $n$, sub-optimal; at the same time, this change need not
affect $K(X\restr_i)$, $i>n$.
It is this non-monotonicity for the settling times of $n\mapsto K(X\restr_n)$ (rather than the 
non-monotonicity of $K$ as a function) that seems to be the obstacle to such a tight bound on the
oracle-use.

{\bf Feasibility of coding.}  
Our online algorithm produces approximations of the code stream of a source, based on the current
computations of the universal compression machine. In other words, at each stage $s$ the code stream of the source
is incompressible with respect to the universal computations that have appeared up to stage $s$; moreover the code stream
reaches a limit as $s\to\infty$.
Since the limit code stream is required to be incompressible against the computations of any Turing machine,
the asymptotic outcome cannot be determined effectively. The analogue of this observation in the compression of finite strings
is the fact that shortest programs for finite strings cannot be computed effectively.

It would be interesting to study ways in which our coding method can be used more effectively.
In the case of finite strings, such approaches include \cite{Bauens6597753,Teutsch:2014:687474} where
it was shown that given any string, it is possible to construct in polynomial time a list of programs that is guaranteed 
to contain a description of the given string, whose length is within $\bigo{1}$ of its Kolmogorov complexity.
A version of the above result 
in a randomized setting, where we allow a small error probability and
the use a few random bits, 
was obtained in \cite{BauwensZimand2014LLS}. 
In the more relevant case of source coding into random streams subject to time and space resource bounds,
the previous methods of Ku\v{c}era \cite{MR820784}, 
G\'{a}cs \cite{MR859105}, Ryabko \cite{Rya84,Rya86} were successfully 
adapted, with some additional work, to many  resource-bounded settings in 
Doty \cite{cie/Doty06,Doty08}. Moreover, Balc{\'a}zar,  Gavald{\`a} and  
Hermo in \cite{MR1455141} 
showed a special case of our Corollary \ref{FxNsrDIzhj} 
for streams of logarithmic initial segment complexity
and with time and space resource bounds.
In this fashion, and given that our methods
are rather different to the methods used in the above articles,
it would be interesting to investigate the extent to which
our main results,
Theorems \ref{yMWDPEuwPx} and \ref{Z6AicbrzJm}, as well as Corollary \ref{FxNsrDIzhj},
hold in the presence of computational feasibility restrictions.

\bibliographystyle{abbrv}
\bibliography{kcode}
\end{document}